\definecolor{ForestGreen}{rgb}{0.1333,0.5451,0.1333}
\newtheorem*{proof*}{\smallskip\noindent{\textbf{Proof}}} 
\newtheorem{theorem}{Theorem}
\newtheorem{problem}{Problem}
\newtheorem{claim}{Claim}
\newtheorem{lemma}{Lemma}
\newtheorem{corollary}{Corollary}
\newtheorem{definition}{Definition}
\newcommand{\hide}[1]{}
\newcommand{\field}[1]{\mathbb{#1}} % requires amsfonts
\newcommand{\Mean}[1]{\ensuremath{{\mathbb E}\left[{#1}\right]}}
\newcommand{\NPhard}{{\ensuremath{\mathbf{NP}}-hard}\xspace}
\newcommand{\NPcomplete}{{\ensuremath{\mathbf{NP}}-complete}\xspace}
\newcommand{\mrc}{\textsc{MapReduce}\xspace}
\newcommand{\mTDS}{{\textsf{Triangle Density}}\xspace}
\newcommand{\TDS}{{\textsf{triangle-densest subgraph}}\xspace}
\newcommand{\KCDS}{{\textsf{$k$-clique-densest subgraph}}\xspace}
\newcommand{\fourCDS}{{\textsf{$4$-clique-densest subgraph}}\xspace}
\newcommand{\DS}{{\textsf{densest subgraph}}\xspace}
\newcommand{\DSP}{{\sc{DS-Problem}}\xspace}
\newcommand{\TDSP}{{\sc{TDS-Problem}}\xspace}
\newcommand{\KCDSP}{{\sc{k-Clique-DS-Problem}}\xspace}
\newcommand{\CTDSP}{{\sc{Constrainted-TDS-Problem}}\xspace}
\newcommand{\spara}[1]{\smallskip\noindent{\bf #1}}
\newcommand{\squishlist}{
 \begin{list}{$\bullet$}
  {  \setlength{\itemsep}{0pt}
     \setlength{\parsep}{3pt}
     \setlength{\topsep}{3pt}
     \setlength{\partopsep}{0pt}
     \setlength{\leftmargin}{2em}
     \setlength{\labelwidth}{1.5em}
     \setlength{\labelsep}{0.5em}
} }
\newcommand{\squishlisttight}{
 \begin{list}{$\bullet$}
  { \setlength{\itemsep}{0pt}
    \setlength{\parsep}{0pt}
    \setlength{\topsep}{0pt}
    \setlength{\partopsep}{0pt}
    \setlength{\leftmargin}{2em}
    \setlength{\labelwidth}{1.5em}
    \setlength{\labelsep}{0.5em}
} }
\newcommand{\squishdesc}{
 \begin{list}{}
  {  \setlength{\itemsep}{0pt}
     \setlength{\parsep}{3pt}
     \setlength{\topsep}{3pt}
     \setlength{\partopsep}{0pt}
     \setlength{\leftmargin}{1em}
     \setlength{\labelwidth}{1.5em}
     \setlength{\labelsep}{0.5em}
} }
\newcommand{\squishend}{
  \end{list}
}
\begin{document}

\title{A Novel Approach to Finding Near-Cliques:\\ The Triangle-Densest Subgraph Problem}%or How to Find Near-Cliques in Polynomial Time}% and Graph Mining Applications}

\author{
  Charalampos E. Tsourakakis\\
  %Institute for Computational and Experimental Research in Mathematics (ICERM)\\
  ICERM, Brown University\\
  \texttt{charalampos\_tsourakakis@brown.edu} 
}

\maketitle \sloppy

%\maketitle

\begin{abstract}  
Many graph mining applications rely on detecting subgraphs which are near-cliques.
There exists a dichotomy between the results in the existing work related to this problem: 
on the one hand the densest subgraph problem (\DSP) which maximizes the average 
degree over all subgraphs is solvable in polynomial time but 
for many networks, fails to find subgraphs which are near-cliques. 
On the other hand, formulations that are geared towards finding near-cliques
are \NPhard  and frequently inapproximable due to connections with the \textsf{Maximum Clique} problem. 

In this work, we propose a formulation which combines 
the best of both worlds: it is solvable in polynomial time 
and finds near-cliques when the \DSP fails. 
Surprisingly, our formulation is a simple variation of the \DSP. 
Specifically, we define the triangle densest subgraph problem (\TDSP): 
given $G(V,E)$, find a subset of vertices $S^*$ such that $\tau(S^*)=\max_{S \subseteq V}  \frac{t(S)}{|S|}$, 
where $t(S)$ is the number of triangles induced by the set $S$.  
We provide various exact and approximation algorithms which the solve \TDSP efficiently.
Furthermore, we show how our algorithms adapt to the 
more general problem of maximizing the $k$-clique average density. Finally, 
we provide empirical evidence that the \TDSP 
should be used whenever the output of \DSP fails to output a near-clique. 
%We believe that the algorithmic tools developed in this work will be useful 
%for various graph mining problems which require the detection of near-cliques. 
\end{abstract}

\section{Introduction}
\label{sec:intro}
A wide variety of graph mining applications relies on extracting dense subgraphs
from large graphs. A list of some important such applications follows.

(1) Bader and Hogue observe that protein complexes, namely groups of 
proteins co-operating to achieve various biological functions, correspond to dense subgraphs 
in protein-protein interaction networks \cite{bader2003automated}. 
This observation is the cornerstone for several 
research projects which aim to identify such complexes, c.f.
\cite{brun2004clustering,pereira2004detection,prvzulj2004functional}.

(2) Sharan and Shamir notice that finding tight co-expression clusters
in microarray data can be reduced to finding dense co-expression subgraphs \cite{sharan2000click}. 
Hu et al.  capitalize on this observation to mine dense subgraphs across
a family of networks \cite{hu2005mining}. 

(3) Fratkin et al. show an approach to finding regulatory motifs in DNA based on 
finding dense subgraphs \cite{fratkin2006motifcut}. 

(4) Iasemidis et al. rely on dense subgraph extraction to study epilepsy \cite{iasemidis2001quadratic}. 

(5) Buehrer  and Chellapilla show how to compress Web graphs using 
as their main primitive the detection of dense subgraphs \cite{buehrer2008scalable}. 

(6) Gibson et al. observe that an algorithm which extracts dense subgraphs 
can be used to detect link spam in Web graphs \cite{gibson}.

(7) Dense subgraphs are used for finding stories and events in micro-blogging streams~\cite{Angel}.

(8) Alvarez-Hamelin et al.  rely on dense subgraphs to provide a better
understanding of the Internet topology \cite{alvarez2006k}.

(9) In the financial domain, extracting dense subgraphs has been applied to, among others,
predicting the behavior of financial instruments~\cite{BBP},
and finding price value motifs~\cite{Du}. 

%(10) It is worth mentioning that \cite{applebaum2010public,arora2010computational}
%rely on the hardness of solving planted dense subgraph problems \cite{feldman2013statistical}. 

Among the various formulations for finding dense subgraphs, 
the densest subgraph problem (\DSP) stands out 
for the facts that is solvable in polynomial time \cite{Goldberg84}
and  $\frac{1}{2}$-approximable in linear time \cite{AHI02,Char00,khuller}. 
To state the \DSP we introduce the necessary notation first. 
In this work we will focus on simple unweighted, undirected graphs. 
Given a graph $G=(V,E)$ and a subset of vertices 
$S\subseteq V$,  let $G(S)=(S,E(S))$ be the subgraph induced by $S$, and
let $e(S)=|E(S)|$ be the size of $E(S)$. 
Also, the \emph{edge density} of the set $S$ is defined 
as $f_e(S) = e(S) / {|S| \choose 2}$.
Notice that finding a subgraph which maximizes $f_e(S)$ is trivial.
Since $0 \leq f_e(S) \leq 1$ for any $S\subseteq V$, a single edge
achieves the maximum possible edge density. Therefore, 
the direct maximization of $f_e$ is not a meaningful problem.
The \DSP maximizes the ratio $\frac{e(S)}{|S|}$
over all subgraphs $S \subseteq V$. 
Notice that this is equivalent to maximizing the average degree. 
The \DSP is a powerful primitive for many graph applications including 
social piggybacking \cite{Gionis:2013:PSN:2536336.2536342} 
reachability and distance query indexing \cite{CohenHKZ02,JinXRF09}.
However, for many applications, including most of the listed applications, 
the goal is to find subgraphs which are near-cliques. 
Since the \DSP fails to find such subgraphs frequently by tending to favor
large subgraphs with not very large edge density $f_e$ 
other formulations have been proposed, see Section~\ref{sec:related}.
Unfortunately, these formulations are \NPhard and also inapproximable
due the connections with the \textsf{Maximum Clique} problem \cite{hastad}. 

The goal of this work is to propose a tractable formulation which 
extracts near-cliques when the \DSP fails.
\hide{ fits better in the related work
Unfortunately, the recent optimal quasi-clique problem 
proposed in \cite{tsourakakis2013denser} 
succeeds in finding near-cliques but is \NPhard and 
inapproximable \cite{mythesis}.}

\subsection{Contributions}

The main contribution of this work is the following:
we propose a novel objective which  attacks efficiently  
the problem of extracting near-cliques from large graphs,
an important problem for many applications, and is tractable.
Specifically, our contributions are summarized as follows.

\spara{New objective.} We introduce the {\em  average triangle density} 
 as  a novel objective for finding dense subgraphs. 
 We refer to the problem of maximizing the average triangle 
 density as the triangle-densest subgraph problem (\TDSP). 
 
\spara{Exact algorithms.} We develop three exact algorithms for the \TDSP. The 
 algorithm which achieves the best running time is based
 on maximum flow computations. It is worth outlining that 
 Goldberg's algorithm for the \DSP \cite{Goldberg84} 
 does not generalize to the \TDSP. 
 For this purpose, 
 we develop a novel approach that subsumes the \DSP
 and solves the \TDSP. Furthermore, our approach can solve
 a generalization of the \DSP and \TDSP that we introduce:
 maximize the   average {\it $k$-clique} density for any $k$ constant. 
 
\spara{Approximation algorithm.} We propose a $\frac{1}{3}$-approximation algorithm 
 for the \TDSP which runs asymptotically faster than any 
 of the exact algorithms. 

\spara{\mrc implementation.} We propose a $\frac{1}{3+3\epsilon}$-approximation
 algorithm for any $\epsilon>0$ which can be implemented efficiently in \mrc.
 The algorithm requires $O(\log(n)/\epsilon)$ rounds and is \mrc-efficient \cite{karloff2010model} due to the 
 existence of  efficient \mrc triangle counting algorithms \cite{suri2011counting}.
 
 \spara{Experimental evaluation.}  It is clear that in general the \DSP and the \TDSP can result
  in very different outputs. For instance, consider a graph 
  which is the union of a triangle and a large complete bipartite clique.
  The \DSP problem is optimized via the bipartite clique, the \TDSP via the triangle.
  Based on experiments the two objectives behave differently on 
  real-world networks as well.  
  For all datasets we have experimented with, we observe that the \TDSP consistently
  succeeds in extracting near-cliques. 
  For instance, in the \textsf{Football} network (see Table~\ref{tab:datasets}
  for a description of the dataset) the \DSP returns the whole graph 
  as the densest subgraph, with $f_e=0.094$ whereas the \TDSP returns 
  a subgraph on 18 vertices with $f_e=0.48$. 
  
  Therefore, the \TDSP 
  should be considered as an alternative to the \DSP 
  when the latter fails  to output near-cliques. 
  Also, we perform numerous experiments on real datasets which show that 
  the performance of the $\frac{1}{3}$-approximation algorithm 
  is close to the optimal performance. 
  
  \spara{Graph mining application.} We propose a modified version of the \TDSP, 
  the constrained triangle densest subgraph problem (\CTDSP), which aims to maximize the triangle 
  density subject to the constraint that the output should contain a prespecified 
  set of vertices $Q$.  We show how to solve exactly the \TDSP. 
  This variation is useful in various data-mining and bioinformatics tasks, see \cite{tsourakakis2013denser}.

  \hide{
\squishlist

 \item We introduce the average {\em triangle density} 
 as  a novel objective for finding dense subgraphs. 
 We refer to the problem of maximizing the average triangle 
 density as the triangle-densest subgraph problem (\TDSP). 
 
 \item We develop three exact algorithms for the \TDSP. The 
 algorithm which achieves the best running time is based
 on maximum flow computations. It is worth outlining that 
 Goldberg's algorithm for the \DSP \cite{Goldberg84} 
 does not generalize to the \TDSP. 
 For this purpose, 
 we develop a novel approach that subsumes the \DSP
 and solves the \TDSP. Furthermore, our approach can solve
 a generalization of the \DSP and \TDSP that we introduce:
 maximize the   average {\it $k$-clique} density for any $k$ constant.

 \item We propose a $\frac{1}{3}$-approximation algorithm 
 for the \TDSP which runs asymptotically faster than any 
 of the exact algorithms.

 \item We propose a $\frac{1}{3+3\epsilon}$-approximation
 algorithm for any $\epsilon>0$ in \mrc which runs 
 in $O(\log(n)/\epsilon)$ rounds and is \mrc-efficient due to the 
 existence of  efficient \mrc triangle counting algorithms \cite{suri2011counting}.
 
 \item We show how our proposed methods generalize to maximizing the  average {\it $k$-clique} density
 for any constant $k\geq 2$.

  \item It is clear that in general the \DSP and \TDSP can result
  in very different outputs\footnote{For instance, consider a graph 
  which is the union of a triangle and a large complete bipartite clique.}.
  Based on experiments we show that the two objectives behave differently on 
  real-world networks as well. 
  For all datasets we have experimented with, we see that always the \TDSP 
  succeeds in extracting near-cliques. Therefore, the \TDSP 
  should be considered as an alternative to the \DSP 
  when the latter fails  to output near-cliques.

  \item We propose a modified version of the \TDSP, 
  the constrained triangle densest subgraph problem (\CTDSP), which aims to maximize the triangle 
  density subject to the constraint that the output should contain a prespecified 
  set of vertices $Q$.  We show how to solve exactly the \TDSP. 
  This variation is useful in various data-mining and bioinformatics tasks, see \cite{tsourakakis2013denser}.

  \item We perform numerous experiments on real datasets which show that 
  the \TDSP and the algorithms we develop in this work can be valuable 
  for graph mining applications which rely on extracting near-cliques. 
  
\squishend
 
}

The paper is organized as follows: Section~\ref{sec:related} presents related work.
Section~\ref{sec:problemdfn} defines and motivates the \TDSP. 
Section~\ref{sec:proposed} presents our theoretical contributions.
Section~\ref{sec:experiments} presents experimental findings on real-world networks. 
Section~\ref{sec:application} presents the \CTDSP.
Finally, Section~\ref{sec:conclusion} concludes the paper.

\section{Related Work}
\label{sec:related}
In Sections~\ref{subsec:finddense} and~\ref{subsec:trianglecounting}
we review related work to finding dense subgraphs and counting triangles respectively. 

\subsection{Finding Dense Subgraphs}
\label{subsec:finddense} 

\spara{Clique.} 
A clique is a set of vertices $S$ such that every two vertices in the subset are connected by an edge.
The \textsf{Clique} problem, i.e., finding whether there is a clique of a given size in a graph is \NPcomplete.
A maximum clique of a graph $G$ is a clique of maximum possible size
and its size is called the graph's clique number. Finding the clique number is \NPcomplete \cite{karp}. 
Furthermore, H$\mathring{a}$stad proved \cite{hastad} that  unless P = NP
there can be no polynomial time algorithm that approximates the 
maximum clique to within a factor better than $O(n^{1-\epsilon})$, for any $\epsilon > 0$.
When the max clique problem is parameterized by the order of the clique 
it is W[1]-hard \cite{downey}. 
Feige \cite{feige} proposed a polynomial time algorithm
that finds a clique of size $O\big( (\frac{\log{n}}{\log\log{n}})^2 \big)$
whenever the graph has a clique of size $O(\frac{n}{\log{n}^b})$ for any constant $b$. 
This algorithm leads to an algorithm that approximates the max clique 
within a factor of $O\big( n \frac{(\log\log{n})^2}{\log{n}^3}\big)$.
A maximal clique is a clique that is not a subset of a larger clique. 
A maximum clique is therefore always maximal, but the converse does not hold.
The Bron-Kerbosch algorithm \cite{BronKerbosch} is an exponential time method 
for finding all maximal cliques in a graph. A near optimal time algorithm 
for sparse graphs was introduced in \cite{Eppstein}.

\spara{Densest Subgraph.}  In the densest subgraph problem we are given a graph $G$ and we wish to 
find the set $S \subseteq  V$ which maximizes the average degree \cite{Goldberg84, Kannan}. 
The densest subgraph can be identified in polynomial time by solving a maximum flow problem
\cite{GGT89,Goldberg84}.  
Charikar \cite{Char00} proved that the greedy algorithm proposed by Asashiro et al.
\cite{AITT00} produces a $\frac{1}{2}$-approximation of the densest subgraph in linear time. 
Both algorithms are efficient in terms of running times and scale to large networks. 
In the case of directed graphs, the densest subgraph problem is solved in polynomial 
time as well \cite{Char00}. Khuller and Saha \cite{khuller} provide a linear time $\frac{1}{2}$-approximation
algorithm  for the case of directed graphs among other contributions. 
We notice that there is no size restriction of the output, i.e., $|S|$ could be arbitrarily large. When 
restrictions on the size of $S$ are imposed the problem becomes \NPhard. 
Specifically, the {\it DkS} problem, namely find the densest subgraph on $k$ vertices, is \NPhard \cite{AHI02}.
For general $k$, Feige, Kortsarz and Peleg \cite{FPK01} provide an approximation guarantee 
of $O(n^{\alpha})$ where $\alpha <1/3$.
Currently, the best approximation guarantee is $O(n^{1/4+\epsilon})$ for any $\epsilon>0$ 
due to Bhaskara et al. \cite{bhaskara2010detecting}.
The greedy algorithm of Asahiro et al.~\cite{AITT00} results in the approximation ratio $O(n/k)$. 
Therefore, when $k=\Omega(n)$ Asashiro et al. gave a constant factor approximation algorithm \cite{AITT00}.
It is worth mentioning that algorithms based on semidefinite programming have produced 
better approximation ratios for certain values of $k$ \cite{FL01}.
From the perspective of (in)approximability, Khot \cite{Khot06} proved that that there does
not exist any PTAS for the {\it DkS} problem under a reasonable complexity assumption.
Arora, Karger, and Karpinski~\cite{AKK95} gave a PTAS for the special case $k=\Omega(n)$ and $m=\Theta(n^{2})$.  
Two interesting variations of the {\it DkS} problem were introduced by Andersen and Chellapilla \cite{AndersenChellapilla}.
The two problems ask for the set $S$ that maximizes the density subject to $s\leq k$  (DamkS) and  $s \geq k$ (DalkS).
They provide a practical 3-approximation algorithm for the DalkS problem
and a slower 2-approximation algorithm. However it is not known whether 
DalkS is \NPhard. For the DamkS problem they showed that if  there exists
a $\gamma$-approximation algorithm for DamkS, then there is a $4(\gamma^2+\gamma)$-approximation
algorithm for the {\it DkS} problem, which indicates that DamkS is likely to be hard as well. 
This hardness conjecture was proved by Khuller and Saha \cite{khuller}.

\spara{Quasi-cliques.}
A set $S \subseteq V$ is a $\alpha$-quasiclique if  $e(S) \geq \alpha {|S| \choose 2}$, i.e., 
if the edge density $f_e(S)$ exceeds a threshold parameter $0 < \alpha \leq 1$. 
Abello et al.  \cite{abello} propose an algorithm for finding maximal 
quasi-cliques. Their algorithm starts with a random vertex and at every step it adds a new vertex
to the current set $S$ as long as the density of the induced graph exceeds the prespecified threshold $\alpha$. 
Vertices that have many neighbors in $S$ and many other neighbors that can also extend $S$ are preferred. 
The algorithm iterates until it finds a maximal $\alpha$-quasi-clique.
Uno presents an algorithm to enumerate all $\alpha$-pseudo-cliques \cite{uno}. 

Recently, \cite{tsourakakis2013denser} introduced a general framework for 
dense subgraph extraction and proposed the optimal quasi-clique problem 
for extracting compact, dense subgraphs. The optimal quasi-clique problem
is \NPhard and inapproximable too \cite{mythesis}.

\spara{$k$-Core.} A $k$-degenerate graph $G$ is a graph in which every subgraph has a vertex of degree at most k.
The degeneracy of a graph is the smallest value of $k$ for which it is $k$-degenerate.
The degeneracy is more known in the graph mining community as the $k$-core number. 
A $k$-core  is a maximal connected subgraph of $G$ in which all vertices have degree at least $k$.
There exists a linear time algorithm for finding $k$ cores by 
repeatedly removing the vertex of the smallest degree \cite{batagelj}.
A closely related concept is the triangle $k$-core, a maximal induced
subgraph of $G$ for which each edge participates in at least $k$ triangles \cite{zhang2012extracting}.
To find a triangle $k$-core, edges that participate in fewer than $k$ triangles
are  repeatedly removed.

\spara{$k$-clubs, $kd$-cliques.} A subgraph $G(S)$ induced by the vertex set $S$ is a $k$-club if the diameter of 
$G(S)$ is at most $k$ \cite{mokken}. $kd$-cliques are conceptually very close to $k$-clubs. 
The difference of a $kd$-clique from a $k$-club is that shortest paths between pairs of vertices from $S$
are allowed to include vertices from $V \backslash S$.

\spara{Shingling.}  Gibson, Kumar and Tomkins \cite{gibson} propose techniques to identify dense bipartite subgraphs
via recursive shingling, a technique introduced by Broder et al. \cite{broder}. 
This technique is geared towards large subgraphs and is based 
on min-wise independent permutations \cite{BroderFrieze}.

\spara{Triangle dense decompositions.} Recently Gupta, Roughgarden 
and Seshadri prove constructively that when the graph has a constant transitivity ratio
then the graph can be decomposed into disjoint dense clusters of radius at most 
two, containing a constant fraction of the triangles of $G$ \cite{gupta2014decompositions}.
 
\subsection{Triangle Counting}
\label{subsec:trianglecounting}

The state of the art algorithm for {\em exact} triangle counting 
is due to Alon, Yuster and Zwick \cite{alon1997finding}  
and runs in $O(m^{\frac{2\omega}{\omega+1}})$,
where currently the fast matrix multiplication exponent $\omega$ is 2.3729
\cite{williams2012multiplying}. Thus, their algorithm currently runs in $O(m^{1.4081})$ time. 
It is worth outlining that algorithms based on matrix multiplication 
are not practical even for medium sized networks due to the high memory requirements.
For this reason, even if listing algorithms solve a more general problem than counting triangles, they
are preferred for large graphs. 
Simple representative algorithms are the node- and the edge-iterator algorithms.
In the former, the algorithm counts for each node the number of edges among its neighbors,
whereas the latter counts for each edge $(i,j)$ the common neighbors 
of nodes' $i,j$. Both have the same asymptotic complexity $O(mn)$, 
which in dense graphs results in $O(n^3)$ time, the complexity of the naive counting algorithm. 
Practical improvements over this family of algorithms have been achieved using 
various techniques, such as hashing and sorting by the degree \cite{latapy2008main,schank2005finding}.
The best known listing algorithm until recently was due to Itai and Rodeh \cite{itai1978finding} 
which runs in $O(m^{3/2})$ time. Recently, Bj\"{o}rklund, Pagh, Williams and Zwick 
gave refined algorithms which are output sensitive algorithms \cite{listing}. 
Finally, it is worth mentioning that a large set of fast approximate triangle counting methods 
exist, e.g., \cite{bar2002counting,braverman2013hard,buriol2006counting,cormode2014second,jha2013space,hu2013massive,DBLP:journals/im/KolountzakisMPT12,pagh2012colorful,pavan2013counting,suri2011counting,tsourakakis2009doulion,tsourakakis2011triangle}.

\section{Problem Definition}
\label{sec:problemdfn} 
In this Section we define and motivate the main problem we consider in this work. 
We first define formally the notion of average triangle density.

\begin{definition}[\mTDS] 
Let $G(V,E)$ be an undirected graph. For any $S\subseteq V$
we define its triangle density $\tau(S)$ as 

$$\tau(S) = \frac{t(S)}{s},$$

\noindent where $t(S)$ is the number of triangles induced by 
$S$ and $s=|S|$.
\end{definition} 

\noindent Notice that $3\tau(S)$ is the average number of (induced) triangles 
per vertex in $S$. In this work we discuss the following problems 
which extend the well-known \DSP \cite{Char00,Goldberg84,Kannan,khuller}.

\begin{problem}[\TDSP] 
Given $G(V,E)$, find a subset of vertices $S^*$ such that $\tau(S^*)=\tau^*_{G}$ 
where 
$$\tau^*_{G}=\max_{S \subseteq V} \tau(S).$$
\end{problem}

\noindent We will omit the index $G$ whenever it is obvious to which graph we refer to.

It is clear that the \DSP and \TDSP in general can result in significantly different
solutions. Consider for instance a graph $G$ on $2n+3$ vertices 
which is the union of a triangle $K_3$  and of a bipartite clique $K_{n,n}$. 
The optimal solutions of the \DSP and the \TDSP are the bipartite clique 
and the triangle respectively. Therefore, the interesting question is whether 
maximizing the average degree and the triangle density result 
in different results in real-world networks. 

\begin{figure*}[htp]
\centering
\includegraphics[width=0.70\textwidth]{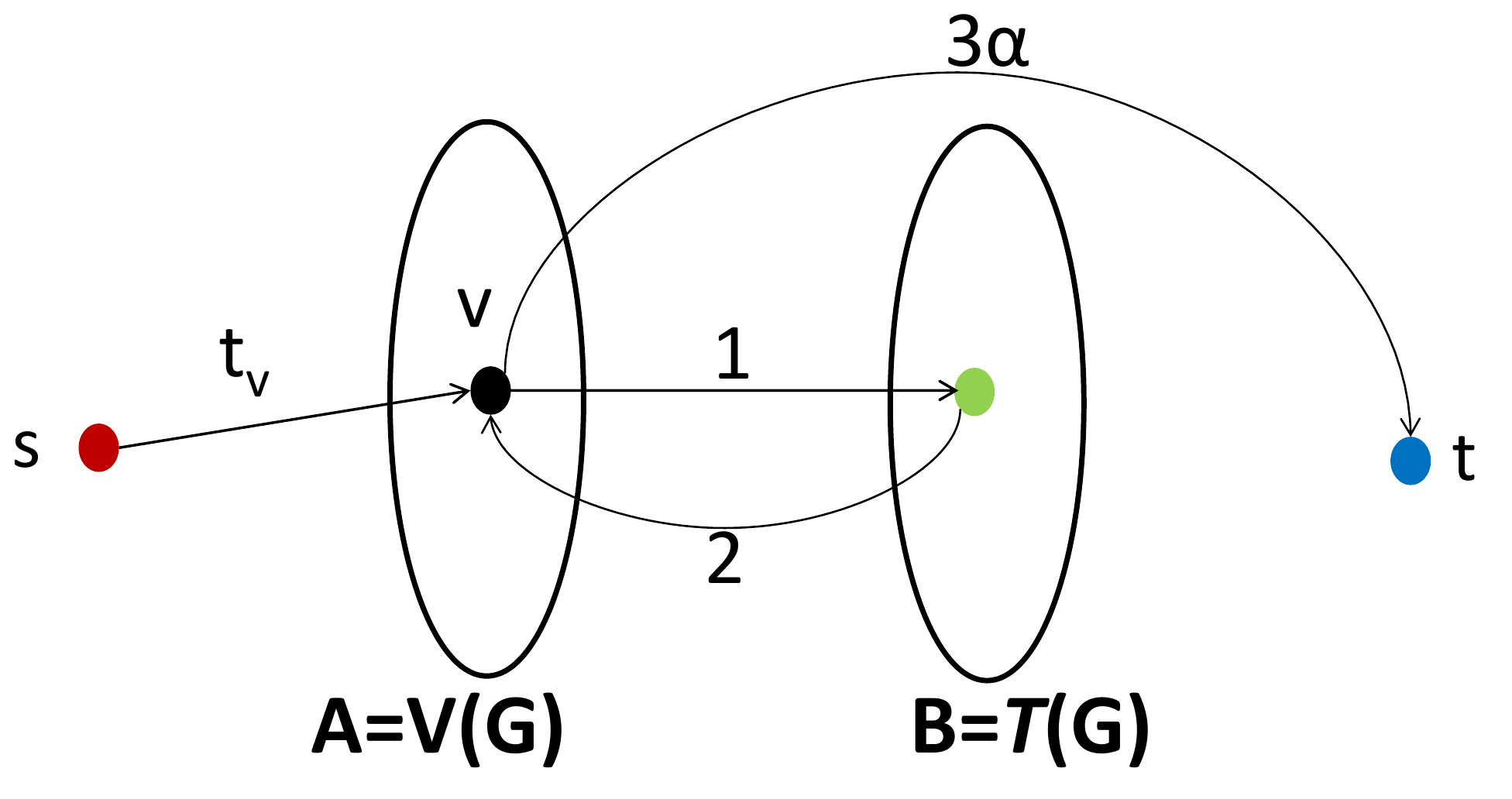}
\caption{\label{fig:network} Figure shows the network $H$ that Algorithm~2
outputs, given a graph $G$ and a parameter $\alpha>0$ as its input. 
Set $A$ corresponds to the vertex set $V(G)$, whereas each vertex in set $B$  corresponds
to a triangle in set $\mathcal{T}(G)$, the set of all triangles in $G$.}
\end{figure*}

It is a well-known fact that triangles play a key role in 
numerous applications related to community detection and 
clustering, e.g., \cite{gleich,watts1998collective}. 
This is reflected to the outcome of the \TDSP. 
Specifically, we observe that solving the \TDSP problem results
in sets with a structure close to a clique.  
This is an important results for applications: as we have mentioned
earlier, there exists a dichotomy among the various formulations used to extract 
dense subgraphs. Either the resulting optimization problem 
is \NPhard or it is polynomially time solvable but tends to output 
subgraphs of larger size than the desired, which fail to be near-cliques.

As we will see in Section~\ref{sec:experiments} in detail, 
the \TDSP consistently succeeds in finding near-cliques, even in cases where 
the \DSP fails. Furthermore, even when the \DSP succeeds in finding dense, compact
subgraphs, the \TDSP output is always superior in terms of the edge 
density $f_e= e(S) / {|S| \choose 2}$
and triangle density $f_t = t(S) / {|S| \choose 3}$\footnote{We will use the term triangle 
density for both $\tau(S)$ and $f_t$. 
It will always be clear from the notation to which of the two measures
we are referring at.}.

Table~\ref{tab:motivational} shows  the results of the optimal subgraphs 
for the \DSP and \TDSP respectively on some popular real-world networks. 
The results are representative on what we have observed on numerous datasets
we have experimented with: 
{\em the \TDSP optimal solution compared to the \DSP optimal solution is 
a smaller and tighter/denser subgraph which exhibits a strong near-clique
structure.} Therefore, the \TDSP appears to combine the best of both worlds: 
polynomial time solvability and extraction of near-cliques.
On the other hand, since all algorithms we propose for the \TDSP require 
running first either a triangle listing or counting algorithm, 
we suggest that the \TDSP should be used in place of the \DSP, 
when the latter fails to extract a near-clique, as the former is computationally
more expensive.

\section{Proposed Method} 
\label{sec:proposed} 
Section~\ref{subsec:exactsols} provides three algorithms which solve \TDSP exactly. 
Sections~\ref{subsec:peel} and~\ref{subsec:mapreduce} provide a $\frac{1}{3}$-approximation algorithm for
the \TDSP and an efficient \mrc implementation respectively. 
Finally, Section~\ref{subsec:kclique} provides a generalization 
of the \DSP and the \TDSP to maximizing the average $k$-clique density 
and shows how the results from previous Sections adapt to this problem.

\subsection{Exact Solutions}
\label{subsec:exactsols}

The algorithm presented in Section~\ref{subsub:maxflow} achieves currently 
the best running time. We present an algorithm which relies on the supermodularity property
of our objective in Section~\ref{subsub:super}.  It is worth outlining that even if 
the algorithm in Section~\ref{subsub:super} is slower, it requires less space than 
the algorithm in Section~\ref{subsub:maxflow}. 
Section~\ref{subsub:lp} presents a linear programming approach which generalizes 
Charikar's linear program \cite{Char00} to the \TDSP.  
Future improvements in the running time of procedures we use as black boxes, 
will imply improvements for our algorithms as well. 

\subsubsection{An $O\big( m^{3/2}+ nt+\min{(n,t)}^3 \big)$-time exact solution}
\label{subsub:maxflow}
\begin{algorithm}[ht]
\begin{algorithmic}[1]
\caption{\tt{\TDS}$(G)$}
\label{alg:alg1}
\STATE{$l \leftarrow 0, u\leftarrow n^3, S^*\leftarrow \emptyset$}
\STATE{List the set of triangles $\mathcal{T}(G)$} 
% $\Delta(u,v)$ -the number of triangles in which edge $\{u,v\}$ is contained in- for each edge $\{u,v\} \in E(G)$ and $t_v$ -the number of triangles in which vertex $v$ participates in- foreach vertex $v$. 
%Let $ \mathcal{F}_1= \{ \Delta(u,v) \forall (u,v) \in E(G) \}, \mathcal{F}_2=\{ t_v \forall v \in V(G)\}$.}
\WHILE{$u\geq l +\frac{1}{n(n-1)}$} 
\STATE{ $\alpha \leftarrow \frac{l+u}{2}$}
\STATE{ $H_{\alpha} \leftarrow$ \textsf{Construct-Network}$(G,\alpha, \mathcal{T}(G) )$ } 
\STATE{$(S,T)\leftarrow$ min $st$-cut in $H_{\alpha}$}% (use either \cite{gusfield1991computing} or \cite{ahuja1994improved}.}  
 \IF{$S=\{s\}$ } 
		\STATE{$u\leftarrow \alpha$} 
 \ELSE
 \STATE{$l\leftarrow \alpha$} 
 \STATE{$S^* \leftarrow  \big( S\backslash \{s\} \big) \cap V(G) $} 
 \ENDIF
\STATE{Return $S^*$}
\ENDWHILE
\end{algorithmic}
\end{algorithm}

 \begin{algorithm}[ht]
\begin{algorithmic}[1]
\caption{ \tt{Construct-Network} $(G,\alpha,   \mathcal{T}(G)  )$ }
\label{alg:alg2}
\STATE{$V(H) \leftarrow \{s\} \cup V(G) \cup \mathcal{T}(G) \cup \{t\} $.}
\STATE{For each vertex $v\in V(G)$ add an arc of capacity 1 to each triangle $t_i$ it participates in.}
\STATE{For each triangle $\Delta=(u,v,w) \in \mathcal{T}(G)$ add arcs to $u,v,w$ 
of capacity 2.} 
%\STATE{For each triangle $\Delta= \{u,v,w\}\in \mathcal{T}(G)$ add two weighted directed arcs 
%$(u,v), (v,u) \in A(H)$. Here, $w(u,v)=w(v,u)=\Delta(u,v)$. }
\STATE{Add directed arc $(s,v) \in A(H)$ of capacity $t_v$ for each $v\in V(G)$. }
%Here, $w(s,v) = t_v$.}
\STATE{Add weighted directed arc $(v,t) \in A(H)$ of capacity $3\alpha$ for each $v\in V(G)$. }
%Here, $w(v,t) = 3 \alpha$.}
\STATE{Return network $H(V(H),A(H),w), s,t \in V(H)$. }
\end{algorithmic}
\end{algorithm}

\noindent Our main theoretical result is the following theorem.  Its proof is constructive.  

\begin{theorem}
\label{thrm:thrm1} 
There exists a polynomial time algorithm which runs in $O\big( m^{3/2}+nt+\min{(n,t)}^3 \big)$ time,
where $n,t$ are the number of vertices and triangles in graph $G$ respectively,
which solves the  \TDSP in polynomial time. 
\end{theorem}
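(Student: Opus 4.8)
The plan is to reduce the \TDSP to a logarithmic number of minimum $s$-$t$ cut computations on the network $H_\alpha$ produced by \textsf{Construct-Network}, exactly as in Algorithm~1, and to power the whole reduction with a clean closed form for the value of a minimum cut in $H_\alpha$. Fix $\alpha$ and consider any $s$-$t$ cut; let $V_1\subseteq V(G)$ be the vertices of $G$ on the source side and $V_2=V(G)\setminus V_1$ those on the sink side. Since $H_\alpha$ has no arcs among the triangle-nodes $\mathcal{T}(G)$ nor among the vertices of $V(G)$, once $V_1$ is fixed each triangle-node may be assigned to whichever side is cheaper, independently of the others. A triangle $\Delta$ with exactly $k$ of its three endpoints in $V_2$ pays $2k$ if placed on the source side (its capacity-$2$ arcs to those $k$ sink-side endpoints are cut) and $3-k$ if placed on the sink side (the capacity-$1$ arcs from its $3-k$ source-side endpoints are cut); hence it contributes $\min(2k,3-k)$. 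Adding the cut capacity-$t_v$ arcs $(s,v)$ for $v\in V_2$, which total $\sum_{v\in V_2}t_v=\sum_\Delta k_\Delta$, and the cut capacity-$3\alpha$ arcs $(v,t)$ for $v\in V_1$, which total $3\alpha|V_1|$, I would show that the minimum cut value equals
$$\min_{V_1\subseteq V(G)} \Big( \sum_\Delta \big(k_\Delta+\min(2k_\Delta,3-k_\Delta)\big) + 3\alpha|V_1| \Big).$$

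Next I would simplify the per-triangle term: $k+\min(2k,3-k)$ equals $0$ when $k=0$ and equals $3$ for $k\in\{1,2,3\}$, so it is $3\cdot\mathbf{1}[\Delta\not\subseteq V_1]$. Summing gives $\sum_\Delta\big(k_\Delta+\min(2k_\Delta,3-k_\Delta)\big)=3\big(t-t(V_1)\big)$, so the minimum cut of $H_\alpha$ equals $3t-3\max_{V_1}\big(t(V_1)-\alpha|V_1|\big)$. This is the crux: the source side of a minimum cut is exactly a set maximizing $t(V_1)-\alpha|V_1|$. Consequently the cut $(\{s\},\cdot)$ is minimum iff $\max_{V_1}(t(V_1)-\alpha|V_1|)=0$, i.e.\ iff no set beats triangle density $\alpha$, so $\tau^*\le\alpha$; otherwise the source side is a nonempty witness $S$ with $\tau(S)>\alpha$. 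This is precisely the test of Algorithm~1, so the \emph{if} branch correctly signals $\tau^*\le\alpha$ (set $u\leftarrow\alpha$) and the \emph{else} branch returns a certifying set $S^*=(S\setminus\{s\})\cap V(G)$ (set $l\leftarrow\alpha$).

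For termination and exactness I would run the standard fractional binary search: initialize $l=0\le\tau^*$ and $u=n^3\ge\binom{n}{3}\ge\tau^*$, maintain the invariant $l\le\tau^*\le u$, and stop once $u-l<\tfrac{1}{n(n-1)}$, which needs only $O(\log n)$ iterations. Because every attainable value $\tau(S)=t(S)/|S|$ has denominator at most $n$, two distinct such values differ by at least $\tfrac{1}{n(n-1)}$; hence the last recorded set $S^*$, which satisfies $\tau(S^*)>l$ together with $\tau(S^*)\le\tau^*\le u<l+\tfrac{1}{n(n-1)}$, must in fact achieve $\tau(S^*)=\tau^*$, and is therefore optimal. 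Care is needed at the boundary $\alpha=\tau^*$, where the maximizer of $t(V_1)-\alpha|V_1|$ is not unique and a minimum-cut routine could return either the trivial or a nontrivial source side; I would fix a tie-breaking rule (e.g.\ the inclusion-maximal minimum cut) so that a nontrivial optimal side is recovered whenever one exists.

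Finally, for the running time I would split it as follows. Listing $\mathcal{T}(G)$ once costs $O(m^{3/2})$ by the Itai--Rodeh listing algorithm \cite{itai1978finding}. The network $H_\alpha$ has $O(n+t)$ arcs, and only the $n$ arcs $(v,t)$ depend on $\alpha$, monotonically; I would therefore fold the $O(\log n)$ cut computations into a single parametric maximum-flow run in the Gallo--Grigoriadis--Tarjan framework, after discarding the triangle-free vertices (those with $t_v=0$) so that the effective number of nonterminal nodes on the contracted side is $O(\min(n,t))$. Bounding this single maximum-flow computation then yields the $nt+\min(n,t)^3$ contribution, for a total of $O(m^{3/2}+nt+\min(n,t)^3)$. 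The main obstacle is the cut characterization of the first two paragraphs: the four capacities ($1$, $2$, $t_v$, $3\alpha$) must be chosen so that the independent per-triangle optimization collapses to the indicator $3\cdot\mathbf{1}[\Delta\not\subseteq V_1]$ and the stray $\sum_{v\in V_2}t_v$ term is absorbed into it; once that identity is in hand, the remainder is the by-now-standard max-flow binary search for a fractional objective.
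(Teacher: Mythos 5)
Your proposal is correct and takes essentially the same route as the paper's proof: the same network $H_\alpha$, the same binary search with upper bound $n^3$ and termination gap $\frac{1}{n(n-1)}$, the same per-triangle analysis of minimum $st$-cuts, and the same use of Itai--Rodeh listing plus the Ahuja--Orlin--Stein--Tarjan parametric max-flow algorithm to remove the logarithmic factor. Your closed form $\mathrm{mincut}(H_\alpha)=3t-3\max_{S\subseteq V}\bigl(t(S)-\alpha|S|\bigr)$ is just a tidier repackaging of the paper's two lemmata (the cut-cost formula $\sum_{v\notin A_1}t_v+2t_2(A_1)+t_1(A_1)+3\alpha|A_1|$ combined with the double-counting identity $\sum_{v\in A_1}t_v=3t_3(A_1)+2t_2(A_1)+t_1(A_1)$), and your explicit tie-breaking remark at $\alpha=\tau^*$ addresses a boundary case the paper leaves implicit.
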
 

\noindent We outline that the first term $O(m^{3/2})$ comes from using 
the Itai-Rodeh \cite{itai1978finding} as our triangle listing blackbox. 
If for instance we use the naive $O(n^3)$ triangle listing algorithm 
then the running time expression is simplified to $O( n^3+nt)$.
On the other hand, if we use the algorithms of Bj\"{o}rklund et al. \cite{listing} 
the first term becomes for dense graphs 
$\tilde{O}\big(n^{\omega}+ n^{3(\omega-1)/(5-\omega)}t^{2(3-\omega)/(5-\omega)}\big)$ 
and for sparse graphs 
$\tilde{O}\big(m^{2\omega/(\omega+1)}+ m^{3(\omega-1)/(\omega+1)}t^{(3-\omega)/(\omega+1)}\big)$,
where $\omega$ is the matrix multiplication exponent. Currently $\omega<2.3729$ 
due to \cite{williams2012multiplying}.
We maintain \cite{itai1978finding}
as our black-box to keep the expressions simpler. However, the 
reader should keep in mind that the result presented in \cite{listing} 
improves the total running time of the first term. 

We will work our way to proving Theorem~\ref{thrm:thrm1} by proving first the following key lemma. 
Then, we will remove the logarithmic factor.
%, simply by invoking a different routine due to Ahuja, Orlin, Stein and Tarjan \cite{ahuja1994improved}.

\begin{lemma}
\label{lem:keylem} 
Algorithm 1  solves the  \TDSP in $O\big( m^{3/2}+ ( nt+\min{(n,t)}^3 )\log(n)\big)$ time.
\end{lemma}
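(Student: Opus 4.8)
The plan is to reduce the \TDSP to a logarithmic number of min $st$-cut computations and to show that a single minimum cut of the network $H_\alpha$ produced by \textsf{Construct-Network} answers the threshold question ``is $\tau^* > \alpha$?'' while simultaneously exhibiting a witness set (here $t=|\mathcal{T}(G)|$ denotes the number of triangles). First I would fix a cut $(\mathcal{S},\mathcal{T})$ of $H_\alpha$ with the source in $\mathcal{S}$ and the sink in $\mathcal{T}$, write $A=\mathcal{S}\cap V(G)$ for the chosen vertices, and expand its capacity term by term: the cut source edges $(s,v)$ for $v\notin A$ contribute $\sum_{v\notin A} t_v$, the cut sink edges $(v,t)$ for $v\in A$ contribute $3\alpha|A|$, and the bipartite edges between $A$ and the triangle side contribute according to how the triangle nodes are placed. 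The crucial step is to place each triangle node optimally given $A$: if $\Delta$ has exactly $k$ of its three endpoints in $A$, putting $\Delta$ on the source side costs $2(3-k)$ (the capacity-$2$ arcs to its $3-k$ endpoints outside $A$), while putting it on the sink side costs $k$ (the capacity-$1$ arcs from its $k$ endpoints inside $A$), so its contribution is $\min\{2(3-k),\,k\}$.

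The key computation, which I expect to be the heart of the argument, is that $\min\{2(3-k),k\}-k$ equals $0$ for $k\in\{0,1,2\}$ and equals $-3$ only when $k=3$; that is, it collapses to $-3\cdot\mathbf{1}[\text{all three endpoints of }\Delta\text{ lie in }A]$. Summing over all triangles and telescoping with $\sum_{v} t_v=3|\mathcal{T}(G)|$ and $\sum_{v\in A} t_v=\sum_\Delta k_\Delta$, the minimum cut value becomes
$$\mathrm{mincut}(H_\alpha)\;=\;3|\mathcal{T}(G)|\;-\;3\max_{A\subseteq V(G)}\big(t(A)-\alpha|A|\big).$$
Since $A=\emptyset$ gives $t(A)-\alpha|A|=0$, the maximand is nonnegative, and the trivial cut with source side $\{s\}$ is minimum iff $\max_A\big(t(A)-\alpha|A|\big)=0$, i.e.\ iff $\tau^*\le\alpha$. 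Hence the test in line~7 either returns source side $\{s\}$, which (the trivial cut then being minimum) certifies $\tau^*\le\alpha$ and justifies $u\leftarrow\alpha$; or it returns a larger source side, whose vertex part $A=S\setminus\{s\}$ satisfies, by comparing the returned minimum cut against the trivial cut, $t(A)-\alpha|A|\ge0$, hence $\tau(A)\ge\alpha$, so we may store $S^*\leftarrow A$ and set $l\leftarrow\alpha$.

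Next I would verify that binary search pins down the optimum exactly. Every attainable density $t(A)/|A|$ is a fraction with denominator $|A|\in\{1,\dots,n\}$, so two distinct attainable densities differ by at least $\frac{1}{n(n-1)}$ --- precisely the stopping width in line~3. I would maintain the invariant $l\le\tau^*\le u$ together with $\tau(S^*)\ge l$ for the stored set, so that at termination $l\le\tau(S^*)\le\tau^*\le u<l+\frac{1}{n(n-1)}$ forces $\tau(S^*)=\tau^*$, i.e.\ $S^*$ is optimal. A short separate check (using $\tau^*\ge\frac13$ whenever a triangle exists, which rules out the degenerate possibility that no nonempty witness is ever stored) secures the boundary case, and the triangle-free case is trivial. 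The number of iterations is $\log_2\!\big(n^3\cdot n(n-1)\big)=O(\log n)$.

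Finally I would assemble the running time. Listing $\mathcal{T}(G)$ once with Itai--Rodeh costs $O(m^{3/2})$. The network $H_\alpha$ has $O(n+t)$ nodes and arcs and is rebuilt in each of the $O(\log n)$ iterations; computing its min $st$-cut via the max-flow routine (which the paper treats as a black box) costs $O\big(nt+\min(n,t)^3\big)$ per iteration. Multiplying the per-iteration cost by $O(\log n)$ and adding the one-time listing cost yields the claimed $O\big(m^{3/2}+(nt+\min(n,t)^3)\log n\big)$ bound. The main obstacle is the min-cut characterization: everything hinges on the exact capacities $1$ and $2$ together with the factor $3\alpha$ conspiring so that the per-triangle contribution reduces to $-3\cdot\mathbf{1}[\text{all three endpoints in }A]$, which is what couples the purely combinatorial objective $t(A)-\alpha|A|$ to the flow value and makes the threshold test correct.
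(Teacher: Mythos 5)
Your proposal is correct and takes essentially the same route as the paper's proof: a binary search over $\alpha$ with initial bounds $[0,n^3]$ and stopping width $\frac{1}{n(n-1)}$, a characterization of the minimum $st$-cut of $H_\alpha$ as $3t-3\max_{A\subseteq V}\big(t(A)-\alpha|A|\big)$, the resulting correctness of the threshold test in line~7, and the per-query cost $O\big(nt+\min{(n,t)}^3\big)$ from the unbalanced-bipartite max-flow black box. Your identity $\min\{2(3-k),k\}-k=-3\cdot\mathbf{1}[k=3]$ is just a compact algebraic restatement of the paper's case analysis of type-1/2/3 triangles and the ensuing double-counting step (Lemmata~\ref{lem:stcut} and~\ref{lem:binary}), so the mathematical content coincides.
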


Algorithm~1 uses  maximum flow computations to solve \TDSP. It is worth outlining 
that Goldberg's maximum flow method \cite{Goldberg84} for the \DSP does not adapt to the case of \TDSP.
Algorithm~1 returns an optimal subgraph $S^*$, i.e., $\tau(S^*)=\tau^*_{G}$.
%a subgraph whose average number of induced triangles per vertex in $S^*$ 
%is the largest possible over all $S \subseteq V$. 
The algorithm  performs a binary search on the triangle density value $\alpha$. 
Specifically, each binary search query corresponds to 
querying {\it does there exist a set $S\subseteq V$ such that 
$t(S)/|S| >\alpha$?}. For each binary search, we construct a network $H$ 
by invoking Algorithm~2.  Let $\mathcal{T}(G)$ be the set of triangles in $G$. 
Figure~\ref{fig:network} illustrates this network. The
vertex set of $H$ is $V(H) = \{s\} \cup A \cup B \cup \{t\}$,
where $A=V(G)$ and $B=\mathcal{T}(G)$. For the purpose
of finding $\mathcal{T}(G)$, a triangle listing algorithm is required \cite{listing,itai1978finding}. 
The arc set of graph $H$ is created as follows. 
For each vertex $r \in B$ corresponding to triangle $\Delta(u,v,w)$ we add 
three incoming and three outcoming arcs. 
The incoming arcs come from the vertices $u,v,w \in A$ which form  triangle $\Delta(u,v,w)$. 
Each of these arcs has capacity equal to 1. The outgoing arcs go to the same set 
of vertices $u,v,w$, but the capacities are equal to 2. 
In addition to the arcs of capacity 1 from each vertex $u \in A$ to the triangles it participates in, 
we add an outgoing arc of capacity $3\alpha$ to the sink
vertex $t$. From the source vertex $s$ we add an outgoing arc to each $u \in A$ 
of capacity $t_v$, where $t_v$ is the number of triangles vertex $v$ participates in $G$. 
As we have already noticed,  $H$ can be constructed in $O(m^{3/2})$ time \cite{itai1978finding}. 
It is worth outlining that after computing $H$ for the first time, 
subsequent networks need to update only the arcs that depend on the parameter
$\alpha$, something not shown in the pseudocode for simplicity. 
To prove that Algorithm~1 solves the \TDSP and runs in $O\big( m^{3/2}+( nt+\min{(n,t)}^3 )\log(n)\big)$  
time we will proceed in steps. 

For the sake of the proof, we introduce the following notation. 
For a given set of vertices $S$ let $t_i(S)$ be the number of 
triangles that involve exactly $i$ vertices from $S$, $i \in \{1,2,3\}$.
Notice that $t_3(S)$ is the number of induced triangles by $S$, for which we have been
using the simpler notation $t(S)$ so far. 
%Finally, let $\Delta(u,v)$ be the number of triangles edge $\{ u,v\} \in E$
%is contained in. Notice that $\Delta(u,v) \leq n-2$ for any edge $\{ u,v\} \in E$. 

We use the following claim as our criterion to set the initial values $l,u$ 
in the binary search. \\
\underline{Claim 1} $0\leq \tau(S) < n^3$ for any $S\subseteq V$. \\
The lower bound is trivial. The upper bound also follows trivially by observing that 
$t_3(S) \leq {n \choose 3}$ and $|S| \geq 1$ for any $\emptyset \neq S \subseteq V$.
This suggests that the optimal value $\tau^*$ is always less than  $n^3$.  

%\newpage 

\begin{figure*}[!htp]
\centering
\begin{tabular}{@{}c@{}@{\ }|c@{}@{\ }|c@{}}
\includegraphics[width=0.18\textwidth]{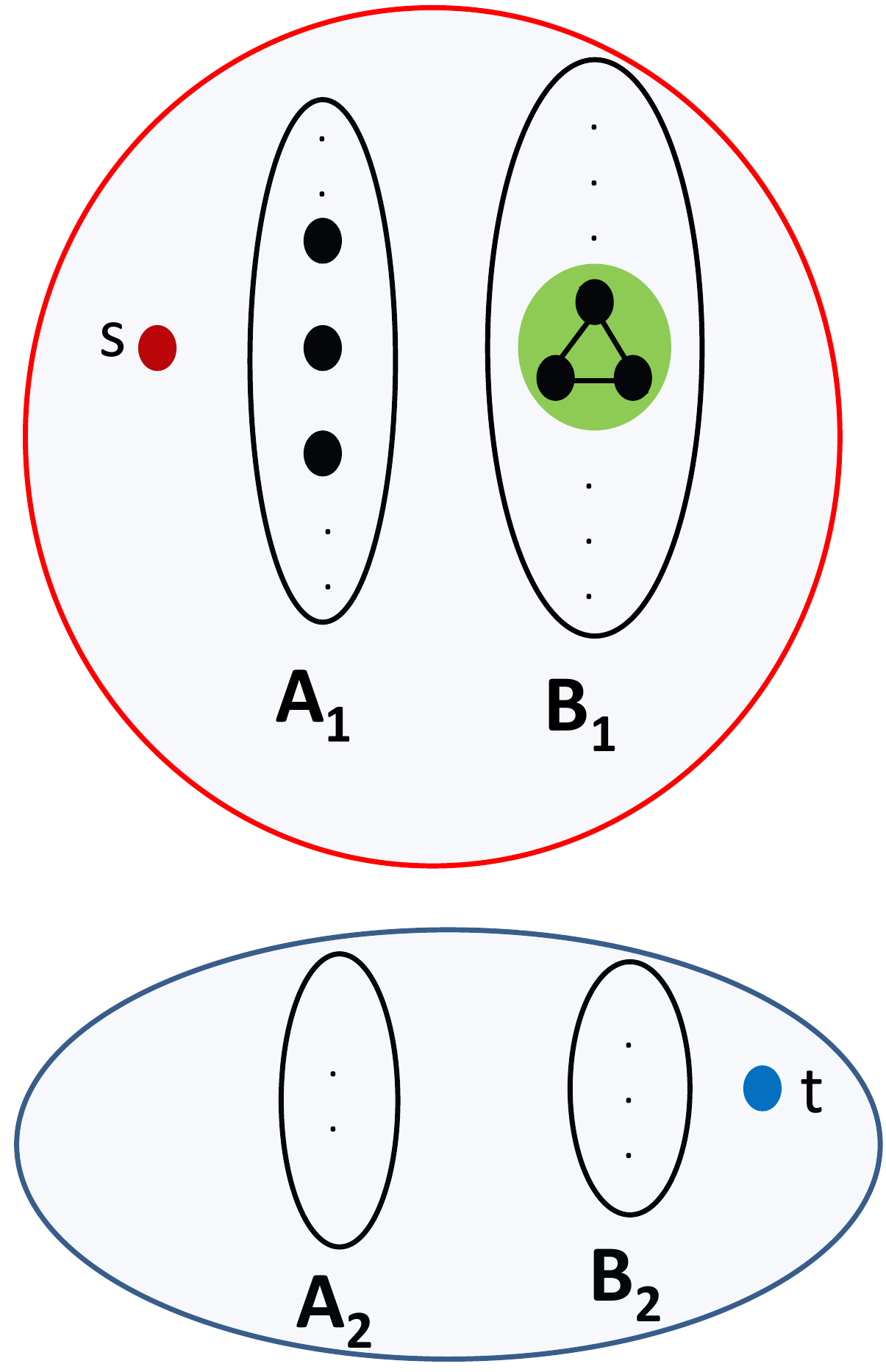} & \includegraphics[width=0.385\textwidth]{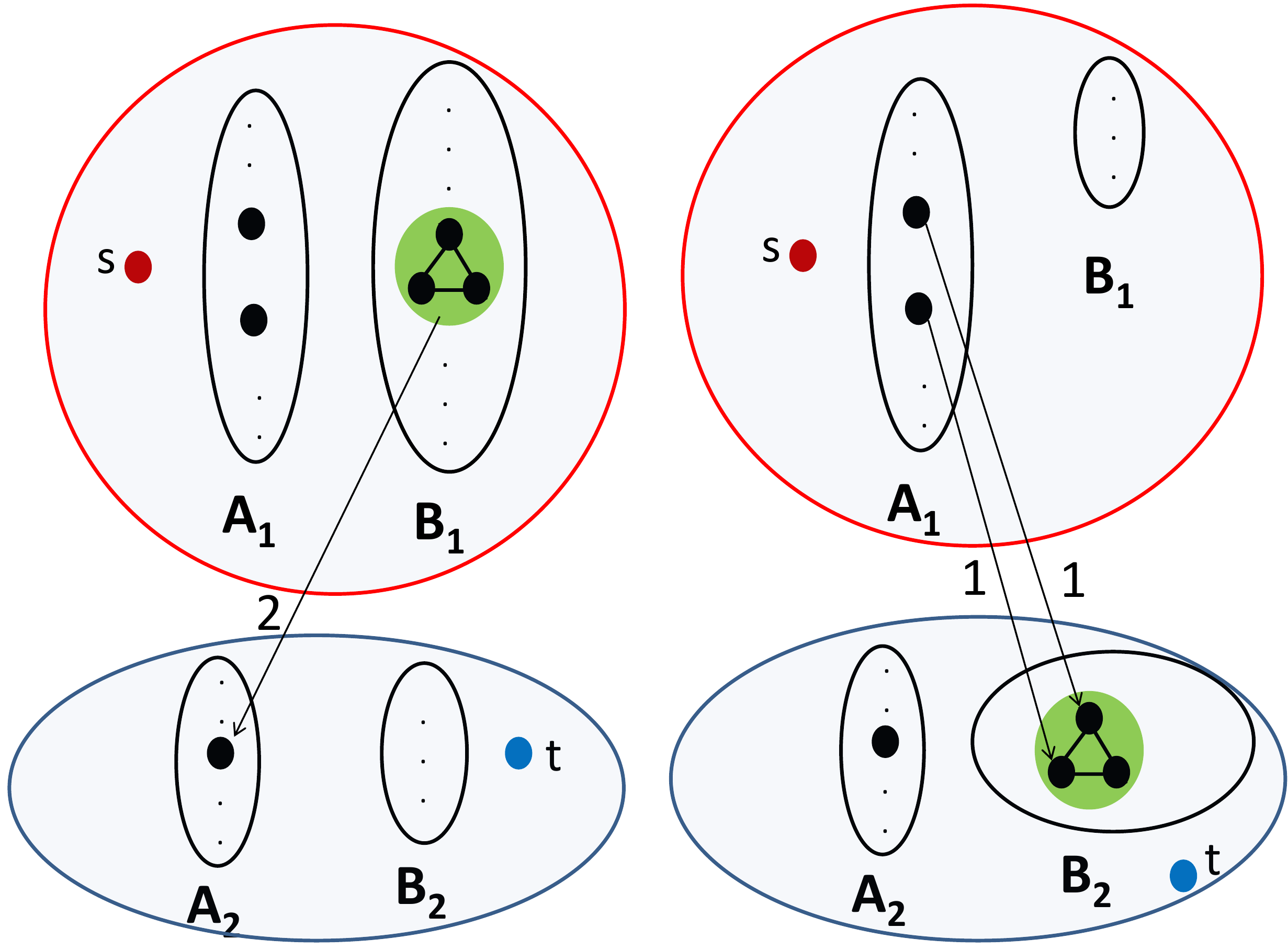}    & \includegraphics[width=0.21\textwidth]{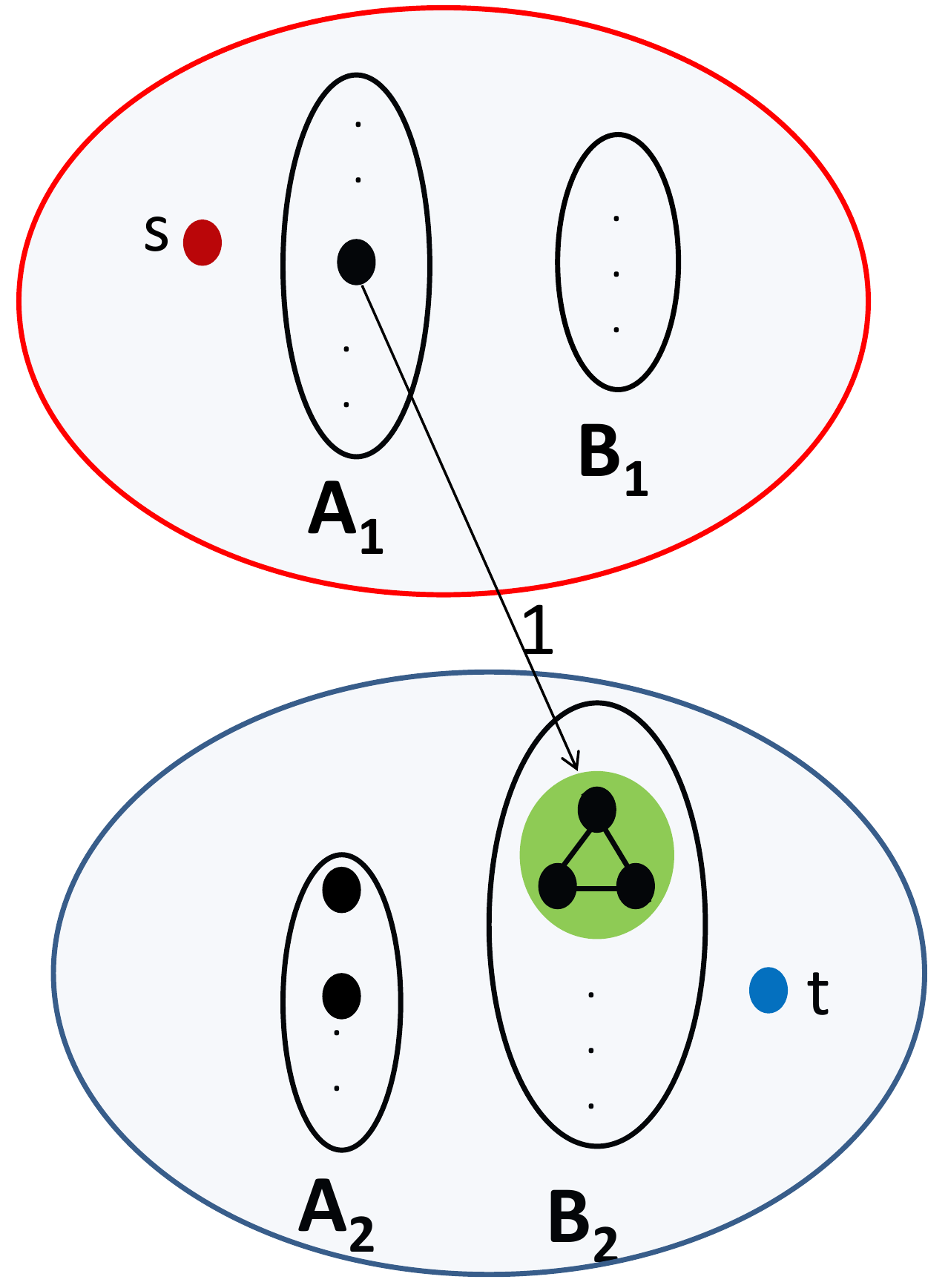}       \\
(a) & (b) & (c)
\end{tabular}
\caption{\label{fig:triangletypes}
Figure shows the structure of any minimum $st$-cut $(S,T)$ in network $H_{\alpha}$, $\alpha>0$.
Sets $S,T$ are shown within the ovals on the top (red) and the bottom (blue) respectively. Figure shows
the structure with respect to three vertices $u,v,w \in A$ (in black) 
which form a triangle and the corresponding triangle-vertex $r \in B$ (green) for the case of a
(a) {\it type 3}, (b) {\it type 2} and (c) {\it type 1} triangle. }
%\vspace{-4mm}
\end{figure*} 
\hide{%alternative caption
Let $u,v,w$ be three vertices which form a triangle $(u,v,w)$. 
In a minimum $st$ cut in network $H_{\alpha}$, for any $\alpha>0$ 
the followings holds. (a){\it Type 3 triangles:} If  $u,v,w \in A\cap S$,
the  vertex $(u,v,w)\in B$ corresponding to the triangle they form, lies in $B\cap S$.
Otherwise the cost of the cut would increase by 3. (b) {\it Type 2 triangles:} 
If $u,v \in A\cap S, w\in A\cap T$ then the triangle-vertex $(u,v,w) \in B$ can lie either
in the side of $S$ or $T$ but in both cases we pay a cost of 2. (c) {\it Type 1 triangles:} 
If $u \in A\cap S, v,w\in A\cap T$ then the triangle-vertex $(u,v,w) \in B$ has to lie
in $B\cap T$, otherwise the cost of the cut would increase by 3.}

The next claim serves as a criterion to decide when to stop the binary search. \\
\underline{Claim 2} The smallest possible difference among two 
distinct values $\tau(S_1),\tau(S_2)$ is equal to $\frac{1}{n(n-1)}$.\\
To see why, notice that the difference $\delta$ between two possible different triangle
density values is 

$$ \delta = \frac{ t(S_1)|S_2|-t(S_2)|S_1| }{|S_1||S_2|}.$$ 

If $|S_1|=|S_2|$ then $|\delta| \geq \frac{1}{n} > \frac{1}{n(n-1)}$,
otherwise $|\delta| \geq \frac{1}{|S_1||S_2|} \geq \frac{1}{n(n-1)}$.
Notice that combining the above two claims shows that the binary 
search terminates in at most $5\log{n}$ queries. The following 
lemma is the key lemma for the correctness of Algorithm~1. %proof of theorem~\ref{thrm:thrm1}. 

\begin{lemma} 
\label{lem:stcut}
Consider any $st$ min-cut $(S,T)$ in network $H_{\alpha}$. Let $A_1=S\cap A, B_1=S\cap B$
and $A_2=T\cap A, B_2=T\cap B$. The cost of the min-cut is equal to 
$$\sum_{v \notin A_1}t_v  + 2t_2(A_1)+t_1(A_1) + 3\alpha |A_1|.$$
\end{lemma}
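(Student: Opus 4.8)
The plan is to compute the cut cost directly by accounting separately for the four families of arcs in $H_{\alpha}$ and then argue that a minimizing cut places each triangle-vertex on its cheaper side. I would fix any $st$-cut $(S,T)$ and write $A_1 = S\cap A$, $A_2 = T\cap A$, $B_1 = S\cap B$, $B_2 = T\cap B$; since only arcs directed from $S$ to $T$ are charged, I tally the contributions one family at a time. The source arcs $(s,v)$ of capacity $t_v$ are cut exactly when $v\in A_2$, contributing $\sum_{v\notin A_1} t_v$, and the sink arcs $(v,t)$ of capacity $3\alpha$ are cut exactly when $v\in A_1$, contributing $3\alpha|A_1|$. These two terms already match two of the four summands and depend only on how $A$ is split.

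The remaining two terms come from the capacity-$1$ arcs $A\to B$ and the capacity-$2$ arcs $B\to A$, and here the placement of each triangle-vertex matters. The key observation is that once $A_1$ is fixed, the cost contributed by a triangle $\Delta(u,v,w)$ decouples from every other triangle, because the corresponding vertex $r\in B$ is adjacent only to $u,v,w\in A$ and to nothing in $\{s,t\}$. Let $i\in\{0,1,2,3\}$ be the number of $u,v,w$ lying in $A_1$. If $r\in B_2$ (the sink side), the cut charges the $i$ capacity-$1$ arcs coming from the $A_1$-endpoints, a cost of $i$; if instead $r\in B_1$ (the source side), it charges the $3-i$ capacity-$2$ arcs going to the $A_2$-endpoints, a cost of $2(3-i)$. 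Hence the per-triangle contribution is $\min\{i,\,2(3-i)\}$, and a minimum cut must realize this minimum for every triangle --- otherwise moving a single mis-placed $r$ to the other side would strictly decrease the cut, contradicting minimality (for the borderline type-$2$ case both sides cost $2$, so the formula is unaffected).

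It then remains to evaluate $\min\{i,\,2(3-i)\}$ over the four types: it equals $0$ when $i=3$, equals $2$ when $i=2$, equals $1$ when $i=1$, and equals $0$ when $i=0$. Summing over all triangles and recalling that $t_j(A_1)$ counts the triangles with exactly $j$ vertices in $A_1$, the $A$--$B$ arcs contribute $2t_2(A_1)+t_1(A_1)$ in total, while the type-$3$ and type-$0$ triangles contribute nothing. Adding the source and sink terms yields the claimed cut value $\sum_{v\notin A_1} t_v + 2t_2(A_1) + t_1(A_1) + 3\alpha|A_1|$. The only step requiring genuine care is the decoupling and optimality argument for the triangle-vertices: I must verify that minimality forces each $r$ onto its cheaper side and that, because $r$ touches no other $B$-vertex, these choices can be made independently with no interaction between triangles. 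The rest is bookkeeping over arc directions and capacities.
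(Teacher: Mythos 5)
Your proposal is correct and follows essentially the same route as the paper's proof: decompose the cut cost into the source-arc, sink-arc, and triangle-gadget contributions, then use an exchange argument (moving a triangle-vertex to the other side would strictly decrease a min cut) to pin each $r\in B$ to its cheaper side, yielding per-triangle costs $0$, $2$, $1$, $0$ for $i=3,2,1,0$ vertices in $A_1$. Your uniform $\min\{i,\,2(3-i)\}$ formulation merely packages the paper's type-by-type case analysis (including its separate $A_1=\emptyset$ case) into a single expression.
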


\begin{proof}

\underline{Case I: $A_1 = \emptyset$:}
In this case the proposition trivially holds,
as the cost is equal to $\sum_{v \in A} t_v = 3t$.
It is worth noticing that in this case $B_1$ has to be also empty,
otherwise we contradict the optimality of $(S,T)$. Hence 
$S=\{s\}, T=A \cup B \cup \{t\}$.

\underline{Case II: $A_1 \neq \emptyset$:}

Consider the cost of the arcs from $A_1 \cup B_1$  to 
$A_2 \cup B_2$. We consider three different subcases,
which are illustrated in Figure~\ref{fig:triangletypes}
If there exist three vertices $u,v,w \in A_1$ that form
triangle $\Delta(u,v,w)$, then the vertex $r \in B$ corresponding
to this specific triangle has to be in $B_1$. If not, then $r \in B_2$, 
and we could reduce the cost of the min-cut by 3, if we move the triangle 
to $B_1$. Therefore the cost we pay for triangles of type three is 0. 
This is shown in Figure~\ref{fig:triangletypes}(a). 
Consider three vertices $u,v,w$ such that they form a triangle $\Delta(u,v,w)$ and
$u,v \in A_1, w \in A_2$. Then, the vertex $r \in B$ corresponding to this triangle
can be either in $B_1$ or $B_2$. The crucial point is that 
we always pay 2 in the cut for each triangle of type two as Figure~\ref{fig:triangletypes}(b)
shows. Finally, in the case $u,v,w$  form a triangle, $u \in A_1, v,w \in A_2$
the vertex $r\in B$ corresponding to triangle $\Delta(u,v,w)$ will be in $B_2$.
If not, then it lies in $B_1$ and we could decrease the cost of the cut by 3
if we move it in $B_2$. Hence, we pay 1 in the cut
for each triangle of type one as Figure~\ref{fig:triangletypes}(c) shows.
Therefore the total cost is equal to $2t_2(A_1)+t_1(A_1)$.

Furthermore, the cost of the arcs from source $s$ to $T$ is equal to 
$\sum_{v \in A_2}t_v=\sum_{v \notin A_1}t_v$. The cost of the arcs from $A_1$ to $T$ 
is equal to $3\alpha |A_1|$. 
Summing up the individual cost terms, we obtain that the cost is equal to 
$\sum_{v \notin A_1}t_v  + 2t_2(A_1)+t_1(A_1) + 3\alpha |A_1|$.
\end{proof}

The next lemma proves the correctness of the binary search in Algorithm~1. 

\begin{lemma} 
\label{lem:binary}
(a) If there exists a set $W \subseteq V(G)$ in $G$ such that $t_3(W) > \alpha|W|$ then 
any $st$-min-cut (S,T) in $H_{\alpha}$ satisfies $S\backslash \{s\} \neq \emptyset$. 
(b) Furthermore, if there does not exists a set $W$ such that $t_3(W) > \alpha|W|$ then 
the cut $(\{s\}, A \cup B \cup \{t\})$ is a minimum $st$-cut. 
\end{lemma}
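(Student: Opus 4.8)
The plan is to build directly on Lemma~\ref{lem:stcut}, which already expresses the cost of a minimum $st$-cut $(S,T)$ in terms of the set $A_1 = S\cap A$. My strategy is to treat that cost as a function of the chosen vertex set $A_1 \subseteq A$, collapse it into a transparent closed form, and then read off both parts of the statement by comparing against the value at $A_1 = \emptyset$ (the cut $(\{s\}, A\cup B\cup\{t\})$).

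The one genuine computation is simplifying the source term $\sum_{v \notin A_1} t_v$. I would introduce $t_0(A_1)$ for the number of triangles with no vertex in $A_1$, matching the existing notation $t_1,t_2,t_3$. Each triangle lies in exactly one of these four classes, and a triangle with $i$ vertices in $A_1$ contributes exactly $3-i$ to $\sum_{v\notin A_1}t_v$ (once for each of its endpoints lying in $A_2 = A\setminus A_1$), so $\sum_{v\notin A_1}t_v = 3t_0(A_1)+2t_1(A_1)+t_2(A_1)$. Substituting into the formula of Lemma~\ref{lem:stcut} and absorbing the extra $2t_2(A_1)+t_1(A_1)$ term, the cost collapses to $3\big(t_0(A_1)+t_1(A_1)+t_2(A_1)\big)+3\alpha|A_1|$. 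Since the four types partition all $t$ triangles, $t_0(A_1)+t_1(A_1)+t_2(A_1)=t-t_3(A_1)$, and hence the minimum cut cost compatible with a fixed $A_1$ is
$$c(A_1) = 3t - 3\big(t_3(A_1) - \alpha|A_1|\big).$$

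With this closed form both parts follow at once. Setting $A_1=\emptyset$ recovers $c(\emptyset)=3t$, the cost of $(\{s\}, A\cup B\cup\{t\})$, consistent with Case~I of Lemma~\ref{lem:stcut}. For part~(a), a set $W$ with $t_3(W)>\alpha|W|$ makes $t_3(W)-\alpha|W|>0$, so choosing $A_1=W$ gives a valid $st$-cut of cost $c(W)<3t$; thus the global minimum cut value is strictly below $3t$, so no minimum cut can have $A_1=\emptyset$ (which by the optimality argument in Case~I forces $B_1=\emptyset$ and $S=\{s\}$). Since $A_1=S\cap A$ is disjoint from $\{s\}$, any minimum cut satisfies $S\setminus\{s\}\supseteq A_1\neq\emptyset$. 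For part~(b), the assumption that no such $W$ exists means $t_3(A_1)-\alpha|A_1|\le 0$ for every $A_1\subseteq V(G)$, whence $c(A_1)\ge 3t = c(\emptyset)$ for all $A_1$, so $(\{s\}, A\cup B\cup\{t\})$ attains the minimum and is a minimum $st$-cut.

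The step requiring the most care is not the conclusion but the justification that $c(A_1)$ is the \emph{optimal} cut cost for each fixed $A_1$, i.e.\ that the triangle-vertices of $B$ can be (and in a minimum cut are) placed to minimize cost given $A_1$. This is precisely the type-by-type placement analysis carried out in the proof of Lemma~\ref{lem:stcut}, so I would invoke it rather than re-derive it: in particular, for part~(a) I use it to exhibit a concrete cut of cost $c(W)$, and for part~(b) I use it to certify that $(\{s\},A\cup B\cup\{t\})$ is genuinely cheapest. Everything else is the routine arithmetic of tabulating per-triangle contributions.
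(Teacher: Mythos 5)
Your proposal is correct and takes essentially the same approach as the paper: both arguments rest on the cut-cost formula of Lemma~\ref{lem:stcut} together with the double-counting identity relating $\sum_v t_v$ to the triangle types $t_1,t_2,t_3$, and both decide each part by comparison against the trivial cut of cost $3t$. The only difference is presentational: you perform the algebra once to get the closed form $c(A_1)=3t-3\bigl(t_3(A_1)-\alpha|A_1|\bigr)$ and then read off (a) and (b) directly, whereas the paper repeats the same computation inside two separate proofs by contradiction.
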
 

\begin{proof} 

(a)  Let $W \subseteq V$ be such that

\begin{equation}\label{eq:contradiction}
  t_3(W) > \alpha|W|.
\end{equation}

Suppose for the sake of contradiction that the minimum
$st$-cut is achieved by $(\{s\}, A \cup B \cup \{t\})$. 
In this case the cost of the minimum $st$-cut is 
$\sum_{v \in A}t_v = 3t$. 
Now, consider the following $(S,T)$ cut.
Set $S$ consists of the source vertex $s$, $A_1=W$ and $B_1$ be the 
set of triangles of type 3 and 2 induced by $A_1$. Let $T$ be the 
rest of the vertices in $H$. The cost of this cut is

$$cap(S,T)  = \sum_{v \notin A_1} t_v + 2t_2(A_1)+t_1(A_1) + 3\alpha |A_1|.$$

\noindent Therefore, by our assumption that the minimum
$st$-cut is achieved by $(\{s\}, A \cup B \cup \{t\})$ we obtain 

\begin{equation}
  3t \leq  \sum_{v \notin A_1} t_v + 2t_2(A_1)+t_1(A_1) + 3\alpha |A_1|.
\end{equation}

\noindent Now, notice that by double counting 

$$\sum_{v \in A_1} t_v = 3t_3(A_1)+2t_2(A_1)+t_1(A_1).$$

\noindent  Furthermore, we observe 
$$\sum_{v \in A_1} t_v + \sum_{v \notin A_1} t_v = 3t.$$

\noindent By combining these two facts, and the fact that 
$3t$ is the capacity of the minimum cut,  we obtain  the
following contradiction of Inequality~\eqref{eq:contradiction}. 

\begin{align*}
 3t \leq \sum_{v \notin A_1} t_v + 2t_2(A_1)+t_1(A_1) + 3\alpha |A_1|&\Leftrightarrow  t_3(W) \leq \alpha|W|. 
\end{align*}

(b)
By Lemma~\ref{lem:stcut}, for any minimum $st$-cut $(S,T)$ the capacity of the cut is equal to 
$\sum_{v \notin A_1} t_v + 2t_2(A_1)+t_1(A_1) + 3\alpha |A_1|$ where $A_1 = A \cap S, A_2 = A \cap T$.
Suppose for the same of contradiction that the cut $(\{s\},  A \cup B \cup \{t\})$ 
is not a minimum cut. Therefore, 

$$ cap(\{s\},  A \cup B \cup \{t\}) = 3t> \sum_{v \notin A_1} t_v + 2t_2(A_1)+t_1(A_1) + 3\alpha |A_1|.$$
    
Using the same algebraic analysis as in (a), the above statement implies the contradiction 
$ t_3(W) > \alpha |W|$, where $W=A_1$. 
\end{proof}

\noindent Now we can complete the proof of Lemma~\ref{lem:keylem}.

\begin{proof}
The termination of Algorithm 1 follows directly from Claims 1, 2.
The correctness follows from Lemmata~\ref{lem:stcut},~\ref{lem:binary}.
The running time follows from Claims 1,2 which show that the number 
of binary search queries is $O(\log(n))$ and each binary search query 
can be performed in $O\big( nt+\min{(n,t)}^3 \big)$ time using
the algorithm due to Ahuja, Orlin, Stein and Tarjan \cite{ahuja1994improved}\footnote{Notice that the 
network $H_{\alpha}$ has $O(n+t)$ arcs, therefore the running time 
is $O(\min{(n,t)} (n+t)+\min{(n,t)}^3)=O(nt+\min{(n,t)}^3)$.}
or Gusfield's algorithm \cite{gusfield1991computing}.
\end{proof}

The proof of Theorem~\ref{thrm:thrm1} follows Lemma~\ref{lem:keylem}
and the fact that the parametric maximum flow algorithm of
Ahuja, Orlin, Stein and Tarjan \cite{ahuja1994improved}, see also \cite{GGT89},
saves the logarithmic factor from the running time.

%\reminder{Running times}

\subsubsection{An $O\big( (n^5 m^{1.4081} + n^6)) \log(n)  \big)$-time exact solution}%Supermodularity}
\label{subsub:super}
In this Section we provide a second  exact algorithm for the \TDSP. 
First, we provide the necessary theoretical background. 

\begin{definition}[Supermodular function]
Let $V$ be a finite set. The set function $f:2^V\rightarrow \field{R}$ 
is supermodular if and only if for all $A,B \subseteq V$ 

$$ f(A\cup B) \geq f(A)+f(B)-f(A\cap B).$$

\noindent A function $f$ is supermodular if and only if $-f$ is submodular. 
\end{definition}

\noindent Sub- and supermodular functions constitute an important class
of functions with various exciting properties. In this work, we are 
primarily interested in the fact that 
{\em maximizing a supermodular function  is solvable in strongly polynomial time} 
\cite{grotschelgeometric,iwata2001combinatorial,lovasz1983submodular,schrijver2000combinatorial}. 
For our purposes, we state the following result which we use as a subroutine
in our proposed algorithm.

\begin{theorem}[\cite{orlin2009faster}]
\label{thrm:orlins}
There exists an algorithm for maximizing an integer valued supermodular function $f$ 
which runs in $O\big( n^5 EO + n^6)  \big)$ time, where $n=|V|$ is the size
of the ground set $V$ and $EO$ is the maximum amount of time 
to evaluate $f(S)$ for a subset $S\subseteq V$.
\end{theorem}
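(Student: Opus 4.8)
The statement is a black-box result imported from \cite{orlin2009faster}, so the paper itself only cites it; nonetheless, here is how I would prove it. Since $f$ is supermodular if and only if $g:=-f$ is submodular, and $\max_{S} f(S) = -\min_{S} g(S)$, the plan is to reduce to submodular function minimization (SFM) and exhibit a fully combinatorial, strongly polynomial algorithm for $\min_{S\subseteq V} g(S)$ within the stated bounds. The two engines I would rely on are Edmonds' greedy algorithm over the base polytope and the Fujishige min--max duality.

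First I would set up the polyhedral picture. For a submodular $g$ normalized so that $g(\emptyset)=0$, define the base polytope
\[ B(g)=\{x\in\field{R}^{V} : x(V)=g(V),\ x(S)\le g(S)\ \forall S\subseteq V\}. \]
Edmonds' greedy algorithm shows that every linear order of $V$ yields an extreme point (an extreme base) of $B(g)$, and that computing one extreme base costs $n$ oracle evaluations, i.e.\ $O(n\cdot EO)$ time. Fujishige's theorem then supplies the duality
\[ \min_{S\subseteq V} g(S)=\max_{x\in B(g)} x^-(V),\qquad x^-(V):=\sum_{v\in V}\min\{x_v,0\}, \]
so minimizing $g$ is equivalent to maximizing $x^-(V)$ over $B(g)$, and the minimizer $S$ can be read off from the sign pattern of an optimal $x$.

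The algorithmic core is to drive a current point $x\in B(g)$ toward an $x^-$-maximizer, maintaining $x$ by Carathéodory's theorem as a convex combination of at most $n+1$ extreme bases. I would follow the Schrijver-style exchange framework \cite{schrijver2000combinatorial,iwata2001combinatorial}: build an auxiliary directed graph on $V$ whose arcs encode feasible exchange directions among the active extreme bases, and repeatedly perform exchange operations that push weight from negative toward positive coordinates, updating the convex multipliers by solving a small linear system. Each such step either increases $x^-(V)$ or shortens a distance label in the auxiliary graph. Orlin's refinement is the scheduling and bookkeeping that \emph{permanently fixes} the value of an element once it is provably separated from the part of $V$ still being adjusted; this caps the number of major phases and of exchange operations, yielding $O(n^5)$ extreme-base/evaluation calls (the $n^5\,EO$ term) and $O(n^6)$ arithmetic for maintaining the bases and multipliers.

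The main obstacle — and the real content of the theorem — is the strongly polynomial termination bound: one must show the iteration count is polynomial in $n$ \emph{alone}, independent of the magnitude of $g$. I would establish this through a potential/distance argument proving monotone progress of the distance labels between element-fixing events, combined with a permanent-fixing lemma guaranteeing that at least one coordinate is frozen every $O(n)$ phases. Once termination within $O(n^5)$ evaluation rounds is secured, the running time follows by charging $O(n\cdot EO)$ to each greedy extreme-base computation and $O(n)$ linear-algebra work per exchange; integrality of $g$ enters only to guarantee the optimum is attained exactly, so no final rounding or scaling step is required.
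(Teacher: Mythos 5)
The paper does not prove this statement at all: Theorem~\ref{thrm:orlins} is imported verbatim from \cite{orlin2009faster} and used purely as a black box inside Algorithm~3, so there is no internal proof to compare yours against. Your proposal correctly recognizes this, and the scaffolding you set up is sound: the negation reduction from supermodular maximization to submodular function minimization, Edmonds' greedy computation of an extreme base in $O(n \cdot EO)$ time, Fujishige's duality $\min_{S} g(S) = \max_{x \in B(g)} x^-(V)$, and the maintenance of $x$ as a convex combination of at most $n+1$ extreme bases are all genuine ingredients of the strongly polynomial SFM algorithms in this line of work.

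As a proof, however, the proposal has a real gap, and it sits exactly where the theorem's content lies. The steps you describe---exchange operations on an auxiliary graph, monotone distance labels, and a ``permanent-fixing lemma'' that freezes a coordinate every $O(n)$ phases---are named but never established; you simply assert that Orlin's ``scheduling and bookkeeping'' caps the work at $O(n^5)$ evaluations and $O(n^6)$ arithmetic, which is precisely the claim to be proven. Note that the framework you invoke is shared by the earlier algorithms of Schrijver \cite{schrijver2000combinatorial} and Iwata--Fleischer--Fujishige \cite{iwata2001combinatorial}, which nevertheless only achieve roughly $O(n^8 EO + n^9)$ and $O(n^7 EO \log n)$ respectively; so the framework alone cannot yield the stated bound, and the missing termination analysis is the entire improvement of \cite{orlin2009faster}. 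Moreover, the exchange-graph-with-distance-labels machinery you attribute to Orlin is really Schrijver's; Orlin's algorithm departs from that paradigm, which is part of how it obtains the better running time. To make the statement self-contained you would need to supply Orlin's actual progress argument; otherwise the correct move is the one the paper makes, namely citing the result as a black box.
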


We show in the following that when the ground set is the set of vertices $V$ 
and $f_{\alpha}:2^V \rightarrow \field{R}$ is defined by $f_{\alpha}(S) = t(S)-\alpha |S|$
where $\alpha \in \field{R}^+$, we can solve the \TDSP in polynomial time.

\begin{theorem}
\label{thrm:supermod}
Function $f:V \rightarrow \field{R}$ where $f(S) = t(S)-\alpha |S|$ is supermodular. 
\end{theorem}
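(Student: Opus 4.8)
The plan is to reduce the statement to the supermodularity of the triangle-counting function alone, and then to prove that by decomposing $t(\cdot)$ into a sum of indicator functions, one for each triangle of $G$. The key structural fact I will rely on throughout is that the class of supermodular functions on $2^V$ is closed under addition, so it is enough to handle the two summands of $f$ separately and, for the harder one, to handle a single triangle at a time.

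First I would dispose of the linear term. Since $|A\cup B| + |A\cap B| = |A| + |B|$ for all $A,B\subseteq V$, the cardinality function is \emph{modular}, i.e. it satisfies the supermodular inequality with equality, and this is preserved under scaling by the constant $-\alpha$. Thus $S\mapsto -\alpha|S|$ is (super)modular, and it remains only to show that $S\mapsto t(S)$ is supermodular, namely that $t(A\cup B) + t(A\cap B) \geq t(A) + t(B)$ for every pair $A,B\subseteq V$.

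Next I would write $t(S) = \sum_{\Delta\in\mathcal{T}(G)} \mathbf{1}_{\Delta}(S)$, where $\mathcal{T}(G)$ is the set of all triangles of $G$ and $\mathbf{1}_{\Delta}(S)$ equals $1$ precisely when all three vertices of $\Delta$ belong to $S$. By closure of supermodularity under sums, it suffices to prove that each fixed indicator $\mathbf{1}_{\Delta}$ is supermodular, so the crux reduces to a single triangle $\Delta$ with vertex set $\{x,y,z\}$. I would then verify $\mathbf{1}_{\Delta}(A\cup B) + \mathbf{1}_{\Delta}(A\cap B) \geq \mathbf{1}_{\Delta}(A) + \mathbf{1}_{\Delta}(B)$ by a short case analysis on the right-hand side: if neither $A$ nor $B$ contains $\{x,y,z\}$ the right side is $0$; if exactly one does, the right side is $1$ and, since that set is contained in $A\cup B$, monotonicity gives $\mathbf{1}_{\Delta}(A\cup B)=1$ matching it; and if both contain $\{x,y,z\}$ then $\{x,y,z\}\subseteq A\cap B$ as well, so the left side equals $2$. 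In every case the inequality holds.

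Summing this over all $\Delta\in\mathcal{T}(G)$ yields supermodularity of $t$, and adding back the modular term $-\alpha|S|$ gives supermodularity of $f$, completing the argument. I do not expect a genuine obstacle here: the whole proof hinges on the elementary observation that an indicator of ``containing a fixed set'' is supermodular. The only point demanding a little care is the middle case of the analysis, where one must invoke monotonicity ($A\subseteq A\cup B$) to conclude that the triangle survives into the union.
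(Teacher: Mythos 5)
Your proof is correct, and it takes a genuinely different route for the key step. The paper also reduces to showing that $t(\cdot)$ is supermodular and treats $-\alpha|S|$ as a (super)modular add-on, but it establishes supermodularity of $t$ via an explicit double-counting identity: $t(A\cup B) = t(A)+t(B)-t(A\cap B)+t_1(A\colon B\setminus A)+t_2(A\colon B\setminus A)$, where the last two terms count triangles straddling $A$ and $B\setminus A$; supermodularity follows since these cross terms are nonnegative. You instead decompose $t(S)=\sum_{\Delta\in\mathcal{T}(G)}\mathbf{1}_{\Delta}(S)$ into indicators of containing a fixed three-element vertex set, verify by case analysis that each such indicator is supermodular, and invoke closure under sums. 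Your argument is more elementary (no counting identity to justify) and strictly more general: the same three-line case analysis shows that the indicator of containing \emph{any} fixed vertex set is supermodular, so supermodularity of the $k$-clique count $c_k(S)$ --- needed for the paper's extension to the \KCDSP --- follows with no extra work, whereas the paper's identity would have to be redone with $k-1$ cross terms. What the paper's approach buys in exchange is an exact expression for the supermodularity surplus (the quantity $t_1+t_2$), which makes transparent precisely which triangles cause strict inequality; your proof localizes this information per triangle but does not aggregate it. Both proofs are complete and correct, including the point you flag yourself: the middle case needs monotonicity of set containment ($A\subseteq A\cup B$), which is exactly right.
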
 

\begin{proof} 
Let $A,B \subseteq V$. Let $t:2^V \rightarrow \field{R}$ be the function 
which for each set of vertices $S$ returns the number of induced triangles $t(S)$. 
By careful counting

$$t(A \cup B) = t(A)+t(B)-t(A\cap B)+ t_1(A: B \backslash A) + t_2(A: B \backslash A),$$ 

\noindent where $t_1(A: B \backslash A),t_2(A: B \backslash A)$ are the number of triangle
with one, two vertices in $A$ and two, one vertices in $B \backslash A$ respectively. 
Hence,  for any $A,B \subseteq V$ 

$$ t(A \cup B)+ t(A\cap B) \geq t(A)+t(B)$$

\noindent and the function $t$ is supermodular. 
Furthermore, for any $\alpha>0$ the function $-\alpha |S|$  is supermodular. 
Since the sum of two supermodular functions is supermodular, 
the result follows. 
\end{proof}

\noindent Theorem~\ref{thrm:supermod} naturally suggests Algorithm~3. 
The algorithm will run in a logarithmic number of rounds. 
In each round we maximize function $f_{\alpha}$ using Orlin's algorithm \textsf{Orlin-Supermodular-Opt}
which takes as input arguments the graph $G$ and the parameter $\alpha>0$. We assume for simplicity
that within the procedure \textsf{Orlin-Supermodular-Opt} function $f$ is evaluated
using an efficient exact triangle counting algorithm \cite{alon1997finding}. 
The algorithm of Alon, Yuster and Zwick \cite{alon1997finding} runs in $O(m^{2\omega/(\omega+1)})$
time where $\omega<2.3729$ \cite{williams2012multiplying}. This suggests the $EO=O(m^{1.4081})$.
The overall running time of Algorithm~3 is $O\big( (n^5 m^{1.4081} + n^6) \log(n)  \big)$.

\begin{algorithm}[ht]
\begin{algorithmic}[1]
\caption{\tt{\TDS}$(G)$ [Supermodularity]}
\label{alg:supermodular}
\STATE{$l \leftarrow 0, u\leftarrow n^3, S^*\leftarrow V$}
\WHILE{$u\geq l +\frac{1}{n(n-1)}$} 
\STATE{ $\alpha \leftarrow \frac{l+u}{2}$}
\STATE{ $(val,S) \leftarrow$ \textsf{Orlin-Supermodular-Opt}$(G,\alpha)$ } 
 \IF{$val<0$ } 
		\STATE{$u\leftarrow \alpha$} 
 \ELSE
 \STATE{$l   \leftarrow \alpha$} 
 \STATE{$S^* \leftarrow  S$} 
 \ENDIF
\STATE{Return $S^*$}
\ENDWHILE
\end{algorithmic}
\end{algorithm}

\subsubsection{A Linear Programming Approach}
\label{subsub:lp}
In this Section we show how to generalize Charikar's linear program, see $\S$2 in \cite{Char00}, to
provide a linear program (LP) which solves the \TDSP.  The main difference 
compared to Charikar's LP is the fact that we introduce a variable $x_{ijk}$ for each
triangle $(i,j,k) \in \mathcal{T}(G)$.  The LP follows.

\medskip
\begin{equation} \label{LP}
\framebox{
\begin{minipage}[b]{0.65\linewidth}
\medskip\par
\begin{equation*}
\mathbf{max}  \qquad \sum_{(i,j,k) \in \mathcal{T}(G)} x_{ijk} 
\end{equation*}
\vskip -15pt
\begin{eqnarray*}
\mbox{{\bf s.t.} }\,\, x_{ijk} \leq y_i &\forall (i,j,k) \in \mathcal{T}(G) \\
\mbox{ }\,\,  x_{ijk} \leq y_j &\forall (i,j,k) \in \mathcal{T}(G) \\
\mbox{ }\,\,  x_{ijk} \leq y_k &\forall (i,j,k) \in \mathcal{T}(G)\\
\mbox{ }\,\,  \sum_i y_i \leq 1 &  \\
\mbox{ }\,\,  x_{ijk} \geq 0 & \forall (i,j,k) \in \mathcal{T}(G) \\
\mbox{ }\,\,   y_i \geq 0 & \forall i \in V(G) \\
\end{eqnarray*}
\end{minipage}\medskip\par}
\end{equation}

\begin{theorem}
\label{thrm:lpopt} 
Let $OPT_{LP}$ be the value of the optimal solution to the LP~\ref{LP}. 
Then, 

$$ \tau_G^*  = OPT_{LP}.$$ 

\noindent Furthermore, a set $S$ achieving triangle density equal to $\tau_G^*$ can 
be computed from the optimal solution to the LP.
\end{theorem}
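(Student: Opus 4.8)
The plan is to prove the two inequalities $OPT_{LP} \geq \tau_G^*$ and $OPT_{LP} \leq \tau_G^*$ separately, and to obtain the rounding claim as a byproduct of the second inequality. This mirrors Charikar's analysis of the densest-subgraph LP, with the $y_i$'s playing the role of (fractional) vertex indicators and the new variables $x_{ijk}$ tracking which triangles are ``induced.''

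For the easy direction $OPT_{LP} \geq \tau_G^*$, I would exhibit a feasible LP point built from an optimal set $S^*$. Setting $y_i = 1/|S^*|$ for $i \in S^*$ and $y_i = 0$ otherwise, and $x_{ijk} = 1/|S^*|$ exactly when all three of $i,j,k$ lie in $S^*$ (and $x_{ijk}=0$ otherwise), feasibility is immediate: $\sum_i y_i = 1$, for each induced triangle the three capacity constraints hold with equality, and for every other triangle $x_{ijk}=0$. The objective evaluates to $t(S^*)/|S^*| = \tau_G^*$, so the LP optimum is at least $\tau_G^*$.

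The substance of the proof is the reverse inequality together with the rounding. First I would observe that at optimality one may assume $x_{ijk} = \min(y_i,y_j,y_k)$, since each $x_{ijk}$ has a positive objective coefficient and its only upper bounds are $y_i,y_j,y_k$. Next I would pass to level sets: for a threshold $r > 0$ define $S(r) = \{\, i \in V(G) : y_i \geq r \,\}$. Under the minimum representation of $x$, a triangle satisfies $x_{ijk} \geq r$ precisely when all three of its endpoints lie in $S(r)$, so the number of triangles above threshold $r$ equals $t(S(r))$. The layer-cake identity then gives $OPT_{LP} = \sum_{(i,j,k) \in \mathcal{T}(G)} x_{ijk} = \int_0^\infty t(S(r))\,dr$ and, similarly, $\sum_i y_i = \int_0^\infty |S(r)|\,dr \leq 1$. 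Since $t(S) \leq \tau_G^*\,|S|$ for every $S \subseteq V$ by the definition of $\tau_G^*$, I can bound $OPT_{LP} = \int_0^\infty t(S(r))\,dr \leq \tau_G^* \int_0^\infty |S(r)|\,dr \leq \tau_G^*$, which is the desired inequality and, combined with the first direction, yields equality.

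The rounding follows from the same decomposition by an averaging argument, which I expect to be the most delicate point. If every nonempty level set satisfied $t(S(r)) < OPT_{LP}\cdot|S(r)|$, then integrating over $r \in (0,\max_i y_i)$ would give the strict inequality $OPT_{LP} = \int_0^\infty t(S(r))\,dr < OPT_{LP}\int_0^\infty |S(r)|\,dr \leq OPT_{LP}$, a contradiction; the degenerate case $\max_i y_i = 0$ forces $OPT_{LP}=0$ and is trivial. Hence some threshold $r$ yields $t(S(r))/|S(r)| \geq OPT_{LP} = \tau_G^*$, and since the reverse always holds, $S = S(r)$ attains the optimum; it is found by scanning the $O(n)$ distinct values taken by the $y_i$'s. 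The only care required is in the measurability and strictness of the integral inequality and in handling the boundary threshold, but no deeper ingredient is needed.
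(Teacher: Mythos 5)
Your proposal is correct and follows essentially the same route as the paper: the easy direction via the feasible point $y_i=\tfrac{1}{|S^*|}\mathbbm{1}(i\in S^*)$, $x_{ijk}=\tfrac{1}{|S^*|}$, and the hard direction via level sets $S(r)=\{i:\bar{y}_i\geq r\}$ with the layer-cake identities $\int_0^\infty |S(r)|\,dr=\sum_i \bar{y}_i\leq 1$ and $\int_0^\infty (\text{triangles above }r)\,dr=\sum \bar{x}_{ijk}$, plus an averaging argument yielding the rounding by scanning $O(n)$ thresholds. The only (harmless) cosmetic differences are that you normalize $\bar{x}_{ijk}=\min(\bar{y}_i,\bar{y}_j,\bar{y}_k)$ so the triangle count at level $r$ is exactly $t(S(r))$ — the paper instead works with $T(r)\subseteq\mathcal{T}(S(r))$ and needs only the inequality — and that you prove $OPT_{LP}\leq\tau_G^*$ directly from $t(S(r))\leq\tau_G^*|S(r)|$ while the paper folds both the inequality and the rounding into a single proof by contradiction.
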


\begin{proof}
We break the proof of $ \tau_G^*  = OPT_{LP} $ in two cases.
The second case provides a constructive procedure for finding a set $S^*$
which achieves triangle density equal to $\tau_G^*$.

\underline{Case I: $ \tau_G^*  \leq OPT_{LP}$}\\
We will prove a more general statement:
for any $S \subseteq V$, the value of the $LP$ is at least $\tau(S)$. 
We provide a feasible LP solution which achieves an objective value equal to $\tau(S)$. 
Let $y_i = \frac{1}{|S|} \mathbbm{1}(i \in S)$  for each $i \in V$. For each 
triangle $\Delta(i,j,k)$ induced by $S$ let $x_{ijk}=\frac{1}{|S|}$. 
For every other triangle $\Delta(i,j,k)$ set $x_{ijk}=0$.
This is a feasible solution to the LP which achieves an objective value equal to $\frac{t(S)}{|S|}$.
By setting $S=S^*$, we obtain $ \tau_G^*  \leq OPT_{LP}$.

\underline{Case II: $ \tau_G^*  \geq OPT_{LP}$}\\
Let $(\bar{x},\bar{y})$ be the optimal solution to the LP.  
We define $S(r)=\{ i: \bar{y}_i \geq r\}$, $T(r) = \{ \Delta(i,j,k) \in \mathcal{T}(G):  \bar{x}_{ijk} \geq r\}$. 
Notice that since $\bar{x}_{ijk} \leq \min{(\bar{y}_i,\bar{y}_j,\bar{y}_k)}$, 
the inequality $\bar{x}_{ijk} \geq r$ implies that vertices $i,j,k$ belong in set $S(r)$. 
Furthermore, $\int_0^1|S(r)|dr = \sum_{i=1}^n \bar{y}_i \leq 1$ and 
 $\int_0^1|T(r)|dr = \sum_{\Delta{(i,j,k)}}x_{ijk}$. If we assume that there exists no value 
 $r$  such that $|T(r)|/|S(r)| \geq OPT_{LP}$ we obtain the contradiction 
 
 $$ OPT_{LP} = \int_0^1|T(r)|dr < OPT_{LP} \int_0^1|S(r)|dr \leq OPT_{LP}.$$ 
 
 \noindent Hence,  $\tau_G^* \geq OPT_{LP}$. To find a set $S^*$ that achieves 
 triangle density at least $OPT_{LP}$, we need to check at most $n$ different
 values of $r$ and checking the corresponding sets $S(r)$. 
\end{proof} 

\hide{
We break the proof of Theorem~\ref{thrm:lpopt} in two lemmata.

\begin{lemma}
\label{lem:c1}
$$ \tau_G^*  \leq OPT_{LP}.$$ 
\end{lemma}

\begin{proof}
We will prove a more general statement:
for any $S \subseteq V$, the value of the $LP$ is at least $\tau(S)$. 
We provide a feasible LP solution which achieves an objective value equal to $\tau(S)$. 
Let $y_i = \frac{1}{|S|} \mathbbm{1}(i \in S)$  for each $i \in V$. For each 
triangle $\Delta(i,j,k)$ induced by $S$ let $x_{ijk}=\frac{1}{|S|}$. 
For every other triangle $\Delta(i,j,k)$ set $x_{ijk}=0$.
This is a feasible solution to the LP which achieves an objective value equal to $\frac{t(S)}{|S|}$.
By setting $S=S^*$, we obtain $ \tau_G^*  \leq OPT_{LP}$.
\end{proof}
}

\subsection{A $\frac{1}{3}$-approximation algorithm}
\label{subsec:peel} 
In this Section we provide an algorithm for the \TDSP which provides 
a $\frac{1}{3}$-approximation. Our algorithm follows the peeling paradigm,
see \cite{AITT00,Char00,khuller,jiang2013parallel}. Specifically, 
in each round it removes the vertex which participates in the smallest number
of triangles and returns the subgraph that achieves the largest triangle density.
The pseudocode is shown in Algorithm 4.

 \begin{algorithm}[ht]
\begin{algorithmic}[1]
\caption{ \tt{Peel-Triangles}$(G)$ }
\label{alg:alg3}
\STATE{$n\leftarrow |V|, H_n \leftarrow G$}
\FOR{$i\leftarrow n$ to $2$} 
\STATE{Let $v$ be the vertex of $G_i$ of minimum number of triangles} 
\STATE{$G_{i-1} \leftarrow G_i \backslash{v}$}
\ENDFOR
\STATE{Return  $H_j$ that achieves maximum triangle density among $H_i$s, $i=1,\ldots,n$. }
\end{algorithmic}
\end{algorithm}

\begin{theorem}
\label{thrm:3approx}
Algorithm~4 is a $\frac{1}{3}$-approximation algorithm for the \TDSP. 
\end{theorem}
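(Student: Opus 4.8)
The plan is to adapt Charikar's analysis of greedy peeling for the \DSP \cite{Char00,AITT00} to the triangle setting, where the factor $\frac13$ (in place of $\frac12$) will emerge from the fact that each triangle is incident to three vertices. For a vertex $v$ and a set $S$ with $v\in S$, write $d_v(S)$ for the number of triangles of $G(S)$ that contain $v$; the basic counting identity I would use throughout is $\sum_{v\in S} d_v(S) = 3\,t(S)$, since every induced triangle contributes to exactly three of the summands.

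The first step is a local optimality lemma for the optimal set $S^*$: every vertex $v\in S^*$ satisfies $d_v(S^*)\ge \tau^*_G$. This follows by the standard exchange argument — if some $v\in S^*$ had $d_v(S^*)<\tau^*_G=t(S^*)/|S^*|$, then deleting $v$ removes exactly $d_v(S^*)$ triangles and one vertex, and a short calculation shows $\tau(S^*\setminus\{v\})>\tau(S^*)$, contradicting the optimality of $S^*$.

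The second step analyses the run of Algorithm~4. Consider the first iteration at which the peeling removes a vertex that belongs to $S^*$; call the current subgraph at that moment $G_i$, so that $S^*\subseteq V(G_i)$. Let $v\in S^*$ be the vertex removed. Since $v$ is chosen as the vertex of minimum triangle count in $G_i$, every vertex $u\in V(G_i)$ has $d_u(G_i)\ge d_v(G_i)$. Monotonicity of triangle counts under taking supersets gives $d_v(G_i)\ge d_v(S^*)\ge \tau^*_G$, where the last inequality is the local optimality lemma and the first holds because every triangle of $G(S^*)$ through $v$ is also a triangle of $G_i$ through $v$. Hence every vertex of $G_i$ participates in at least $\tau^*_G$ triangles of $G_i$, and summing via the counting identity yields $3\,t(G_i)=\sum_{u} d_u(G_i)\ge \tau^*_G\,|V(G_i)|$, i.e.\ $\tau(G_i)\ge \tau^*_G/3$. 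Because Algorithm~4 returns the densest of all the intermediate subgraphs $H_i$, its output has triangle density at least $\tau(G_i)\ge \frac13\tau^*_G$, which is the claimed $\frac13$-approximation.

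The only genuinely delicate points are bookkeeping ones rather than conceptual obstacles: I must make sure the ``first vertex of $S^*$ to be removed'' is well defined and that at that moment $G_i$ really contains all of $S^*$ (true because no vertex of $S^*$ has been deleted yet), and I must invoke the correct counting identity $\sum_v d_v(S)=3t(S)$ — it is precisely this factor of $3$, rather than the factor of $2$ appearing for edges in the \DSP, that degrades the approximation ratio from $\frac12$ to $\frac13$. I expect the exchange-argument calculation in the local optimality lemma to be the main thing to state carefully, though it is routine.
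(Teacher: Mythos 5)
Your proposal is correct and follows essentially the same route as the paper's own proof: the same exchange argument showing $t_{S^*}(v)\ge\tau^*_G$ for every $v\in S^*$, the same analysis of the iteration at which the first vertex of $S^*$ is peeled (using greediness plus monotonicity of per-vertex triangle counts), and the same counting identity $\sum_{v\in S}t_S(v)=3t(S)$ yielding the factor $\frac{1}{3}$. No gaps; your write-up is, if anything, slightly more explicit about the monotonicity step than the paper's.
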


\begin{proof}
Let $S^*$  be an optimal set. Let $v \in S^*, |S^*|=s^*$ and 
$t_A(v)$ be the number of induced triangles by $A$ that $v$ participates in. 
Then, 

\begin{align*}
\tau^*_{G}=\frac{t(S^*)}{s^*} \geq \frac{ t(S^*\backslash \{v\}) }{s^*-1} &\Leftrightarrow t_{S^*}(v) \geq \tau^*_{G}, 
\end{align*}

\noindent since $t(S^*\backslash \{v\}) = t(S^*) - t_{S^*}(v)$.
Consider the iteration before the algorithm removes the first vertex $v$ that belongs in $S^*$. 
Call the set of vertices $W$. Clearly, $S^*\subseteq W$ and for each vertex $u\in W$ the following
lower bound holds $t_W(u) \geq t_W(v)  \geq t_{S^*}(v)> \tau^*_{G}$ due to the greediness of Algorithm~3. 
This provides a lower bound on the total number of triangles induced by $W$

\begin{align*} 
t(W) &= \frac{1}{3} \sum_{u \in W} t_{W}(u) \geq \frac{1}{3}|W| \tau^*_{G} \Rightarrow \frac{t(W)}{|W|} \geq \frac{1}{3} \tau^*_{G}.
\end{align*} 

To complete the proof, notice that the algorithm returns a subgraph $S$ such that  $\tau(S) \geq \tau(W) \geq \frac{1}{3}\tau^*_{G}$.
\end{proof} 

In Section~\ref{subsec:setup}  we provide a simple implementation which runs 
in $O\big(\sum_v {deg(v) \choose 2}\big)=O(mn)$ time with the use of extra space. 
The key differences compared to the \DSP peeling algorithm \cite{Char00},  
are (i) we need to count triangles initially and (b) when 
we remove a vertex, the counts of its neighbors can decrease more than 1
in general. Therefore, when vertex $v$ is removed, we update the counts 
of its neighbors in $O\Big({deg(v) \choose 2}\Big)$ time. 

\hide{
When we remove a vertex from the graph, we

\noindent {\em Implementation Details:} 

A naive implementation of Algorithm 3 which runs in $O(mn)$ time 
and uses $O(n^2)$ space works as follows. 
Initially we count the number
of triangles in $G$, which in general takes $O(mn)$ time \cite{schank2005finding}. 
We maintain an array of size $t_{\max}+1$ where $t_{\max}=\max_{v \in V(G)} t_v$. 
Each entry of the array 

follows. 

We maintain an array 
of hash tables, where the $i$-th entry of the array contains the hash table with all the vertices
with $i$ participating triangles. Notice that at most $O(n)$ such entries suffice as there can be 
at most $n$ different $t_v$ values ranging from 0 up to ${n-1 \choose 2}$. 
 We also maintain the smallest index $i_{\min}$ in the array with a non-empty hash table. 
 When we peel of any vertex from 
}

\subsection{ \textsc{MapReduce} Implementation } 
\label{subsec:mapreduce} 
The \mrc framework \cite{dean} has become the {\it de facto} standard for processing
large-scale datasets. Since the original work of Dean and Ghemawat \cite{dean}, 
a lot of research has focused on developing efficient algorithms for various
graph theoretic problems including the densest subgraph problem \cite{bahmani2012densest}, 
minimum spanning trees \cite{karloff2010model,lattanzi2011filtering}, 
finding connected components \cite{DBLP:conf/icdm/KangTF09,karloff2010model,lattanzi2011filtering} 
and estimating the diameter \cite{Kang:2011:HMR:1921632.1921634},
triangle counting \cite{pagh2012colorful,suri2011counting,tsourakakis2011triangle}
and matchings, covers and  min-cuts \cite{lattanzi2011filtering}.

In the following, we show how we can approximate efficiently the \TDSP in \mrc.
Before we describe the algorithm, we show that Algorithm 5 for any $\epsilon>0$
terminates and provides a $\frac{1}{3+3\epsilon}$-approximation. The idea behind
this algorithm is to peel vertices in batches \cite{bahmani2012densest,goodrich2011external} 
rather than one by one.

 \begin{algorithm}[ht]
\begin{algorithmic}[1]
\caption{ \tt{Peel-Triangles-in-Batches}$(G,\epsilon>0)$ }
\label{alg:alg4}
%\STATE{$n\leftarrow |V|, H_n \leftarrow G$}
\STATE{ $S_{out}, S \leftarrow V$} 
\WHILE{ $S \neq \emptyset$}
\STATE{ $A(S) \leftarrow \{ i \in S: t_S(i) \leq 3(1+\epsilon) \tau(S)\} $}
\STATE{ $S \leftarrow S \backslash A(S)$}
\IF{$ \tau(S) \geq \tau(S_{out}) $} 
\STATE{ $S_{out} \leftarrow S$}	
\ENDIF
\ENDWHILE
\STATE{Return  $S_{out}$.}
\end{algorithmic}
\end{algorithm}

\begin{lemma} 
\label{lem:peeloff}
%Let $\epsilon>0$. 
For any $\epsilon>0$, Algorithm~5 provides a $\frac{1}{(3+3\epsilon)}$-approximation to the \TDSP.
Furthermore, it terminates in $O(\log_{1+\epsilon}(n))$ passes. 
\end{lemma}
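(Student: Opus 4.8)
The plan is to prove the two assertions separately, reusing the structural fact about optimal sets that is established inside the proof of Theorem~\ref{thrm:3approx}.

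For the approximation ratio, I would fix an optimal set $S^*$ with $\tau(S^*)=\tau^*_{G}$ and track the first iteration in which Algorithm~5 deletes a vertex of $S^*$. Let $S$ denote the set at the \emph{start} of that iteration; since no vertex of $S^*$ has been removed yet, $S^*\subseteq S$. Let $v\in S^*$ be one of the deleted vertices, so $v\in A(S)$ and hence $t_S(v)\le 3(1+\epsilon)\tau(S)$. Monotonicity of triangle counts under set inclusion gives $t_S(v)\ge t_{S^*}(v)$, and the optimality argument from the proof of Theorem~\ref{thrm:3approx} gives $t_{S^*}(v)\ge \tau^*_{G}$. Chaining these yields $\tau^*_{G}\le t_S(v)\le 3(1+\epsilon)\tau(S)$, i.e. $\tau(S)\ge \frac{1}{3+3\epsilon}\tau^*_{G}$. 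It then remains to argue that this particular $S$ is actually captured by the variable $S_{out}$: the set $S$ is either the initial $V$ (to which $S_{out}$ is initialized) or the output $S\backslash A(S)$ of the previous iteration, which was tested in the \texttt{if} statement at the end of that iteration. In either case the ``best-so-far'' bookkeeping of $S_{out}$ guarantees $\tau(S_{out})\ge \tau(S)\ge \frac{1}{3+3\epsilon}\tau^*_{G}$ at termination.

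For the bound on the number of passes, the key is a double-counting identity: summing $t_S(i)$ over $i\in S$ counts each induced triangle three times, so $\sum_{i\in S} t_S(i)=3t(S)=3|S|\,\tau(S)$. Every vertex surviving a round, i.e. lying in $S\backslash A(S)$, has $t_S(i)>3(1+\epsilon)\tau(S)$ by definition of $A(S)$. If more than a $\frac{1}{1+\epsilon}$-fraction of $S$ survived, the left-hand sum would strictly exceed $\frac{1}{1+\epsilon}|S|\cdot 3(1+\epsilon)\tau(S)=3|S|\tau(S)$, contradicting the identity. Hence $|S\backslash A(S)|\le \frac{1}{1+\epsilon}|S|$, so the surviving set shrinks by a factor of at least $1+\epsilon$ each round. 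Starting from $|V|=n$, after $O(\log_{1+\epsilon}(n))$ rounds the set is empty, which gives both termination and the stated pass count. As a sanity check, the minimum-triangle vertex always satisfies $t_S(i)\le 3\tau(S)\le 3(1+\epsilon)\tau(S)$, so $A(S)\neq \emptyset$ and strict progress is made each round.

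The two inequality chains are routine once the setup is fixed; the genuinely conceptual step is the choice of the threshold $3(1+\epsilon)\tau(S)$, which sits just above the \emph{average} per-vertex triangle count $3\tau(S)$ and thereby forces a constant fraction of vertices below it, driving the geometric shrinkage. I expect the only real subtlety in the write-up to be the bookkeeping in the approximation argument: one must verify that the good set $S$ (which lives at the \emph{start} of the critical iteration, before deletion) is exactly one of the sets against which $S_{out}$ is compared, since Algorithm~5 evaluates $\tau$ only \emph{after} each batch removal.
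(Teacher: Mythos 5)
Your proof is correct and follows essentially the same route as the paper's: the same critical-iteration argument (chaining $\tau^*_{G} \le t_{S^*}(v) \le t_W(v) \le 3(1+\epsilon)\tau(W)$ at the first round that removes a vertex of $S^*$) and the same double-counting/averaging argument showing $|S\backslash A(S)| \le \frac{1}{1+\epsilon}|S|$, hence $O(\log_{1+\epsilon}(n))$ rounds. Your explicit bookkeeping check that the critical set is one of the sets compared against $S_{out}$ is a detail the paper leaves implicit, but the argument is otherwise identical.
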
 

\begin{proof*} 
Let $S^*$ be an optimal solution to the \TDSP. 
As we proved in Theorem~\ref{thrm:3approx}, 
for any $v\in S^*$ it is true that $t_{S^*}(v) \geq \tau^*_{G}$.
Furthermore, in each round at least one vertex is removed. To see why, assume
for the sake of contradiction that $A(S)=\emptyset$ for some $S$ during the execution
of the algorithm. Then, we obtain the contradiction that
$3|S|\tau(S)= \sum_{v \in S} t_S(v) \geq (3+3\epsilon)|S|\tau(S)$.
Consider the round where the algorithm for the first time removes a vertex $v \in S^*$.
Let $W$ be the  corresponding set of vertices. 
Since $v \in A(W)$ is peeled off, we obtain an upper bound on its
induced degree, namely  $v \in A(W) \Rightarrow t_W(v) \leq (3+3\epsilon)\tau(W)$. 
Since $S^*\subseteq W$,  we obtain 

\begin{align*} 
(3+3\epsilon) \tau(W) &\geq t_W(v) \geq t_{S^*}(v) \geq \tau(S^*), 
\end{align*}  
\noindent which proves that Algorithm~5 is a $\frac{1}{(3+3\epsilon)}$-approximation to the \TDSP.
To see why the algorithm terminates in logarithmic number of rounds, notice that 

%3t(S) &= \sum_{v \in S\backslash A(S) } t_{S}(v) +  \sum_{v \in A(S)} t_{S}(v) > \sum_{v \in S\backslash A(S) } t_{S}(v)  \geq (3+3\epsilon) \big(|S|-|A(S)|\big) \frac{t(S)}{|S|} \Leftrightarrow \\
\begin{align*}
3t(S) &> \sum_{v \in S\backslash A(S) } t_{S}(v)  \geq (3+3\epsilon) \big(|S|-|A(S)|\big) \frac{t(S)}{|S|} \Leftrightarrow \\
|A(S)| &\geq \frac{\epsilon}{1+\epsilon} |S| \Leftrightarrow |S\backslash A(S)| \leq \frac{1}{1+\epsilon}|S|.
\end{align*}

\noindent Since $S$ decreases by a factor of $\frac{1}{1+\epsilon}$ in each round, the algorithm
terminates in $O(\log_{1+\epsilon}(n))=O\big(\frac{\log(n)}{\epsilon}\big)$ rounds. 
\end{proof*}

\noindent {\em \mrc Implementation:} Now we are able to describe our algorithm in \mrc. It uses 
any of the efficient algorithms of Suri and Vassilvitski \cite{suri2011counting} 
as a subroutine to count triangles per vertex in each round. 
The removal of the vertices which participate in less triangles than the threshold, 
is done in two rounds, as in \cite{bahmani2012densest}. 
For completeness, we describe the procedure here. 
The set of vertices $S$ to be peeled off in each round are marked by adding a key-value pair $\langle v;\$ \rangle$
for each $v\in S$. Each edge $(u,v)$ is mapped to $\langle u;v \rangle$. 
The reducer receives all endpoints of the edges incident with $v$ and the symbol $\$$ in case the 
vertex is marked for deletion. In case the vertex is marked, then the reduce task returns nothing,
otherwise it copies its input. In the second round, we perform the same procedure with the only difference
being that we map each edge $(u,v)$ to $\langle v;u \rangle$. Therefore, the edges which remain have both endpoints
unmarked. The algorithm runs in  $O(\log(n)/\epsilon)$, as 
it takes  $O(\log(n)/\epsilon)$ peeling off rounds,
and in each peeling round, constant number of rounds is needed to count
triangles per vertex, mark vertices for deletion and remove the corresponding vertex set.

\subsection{$k$-clique Densest Subgraph} 
\label{subsec:kclique}
We outline that our proposed methods can be adapted to the following 
generalization of \DSP and \TDSP.

\begin{definition}[\KCDS] 
Let $G(V,E)$ be an undirected graph. For any $S\subseteq V$
we define its $k$-clique density $h_k(S)$, $k\geq 2$ as 

$$h_k(S) = \frac{c_k(S)}{s},$$

\noindent where $c_k(S)$ is the number of $k$-cliques induced by 
$S$ and $s=|S|$.
\end{definition}

\begin{problem}[\KCDSP] 
Given $G(V,E)$, find a subset of vertices $S^*$ such that $h_k(S^*)=h^*_{k}$ 
where 
$$h^*_{k}=\max_{S \subseteq V} h_k(S).$$
\end{problem}

As in the triangle densest subgraph problem, we create a network $H$ parameterized
by the value $\alpha$ on which we perform our binary search.  
The procedure is described in Algorithm~6. The set $\mathcal{C}(G)$ is the 
set of $k$-cliques in $G$. We then invoke Algorithm~1, with the upper bound $u$ set to $n^k$. 
Following the analysis of Theorem~\ref{thrm:thrm1}, we see that the \KCDSP 
is solvable in polynomial time. For instance, using Gusfield's algorithm 
\cite{gusfield1991computing} or  \cite{ahuja1994improved} in each binary search query we get an overall
running time $O\big(  n^k+ (n|\mathcal{C}(G)| + n^3) \log(n)\big) =O(n^{k+1}\log(n))$.
Using the improved result due to Ahuja, Orlin, Stein and Tarjan for parametric 
max flows in unbalanced bipartite graphs \cite{ahuja1994improved}, 
we save the logarithmic factor in the running time.

 \begin{algorithm}[ht]
\begin{algorithmic}[1]
\caption{ \tt{Construct-Network-$k$} $(G,\alpha,   \mathcal{C}(G),k  )$ }
\label{alg:alg5}
\STATE{$V(H) \leftarrow \{s\} \cup V(G) \cup \mathcal{C}(G) \cup \{t\} $.}
\STATE{For each vertex $v\in V(G)$ add an arc of capacity 1 to each $k$-clique $c_i$ it participates in.}
\STATE{For each $k$-clique $(u_{i_1},\ldots,u_{i_k}) \in \mathcal{C}(G)$ add arcs to $u_{i_1},\ldots,u_{i_k}$ 
of capacity $k-1$.} 
\STATE{Add directed arc $(s,v) \in A(H)$ of capacity $c_v$ for each $v\in V(G)$. }
%Here, $w(s,v) = t_v$.}
\STATE{Add weighted directed arc $(v,t) \in A(H)$ of capacity $k\alpha$ for each $v\in V(G)$. }
%Here, $w(v,t) = 3 \alpha$.}
\STATE{Return network $H(V(H),A(H),w), s,t \in V(H)$. }
\end{algorithmic}
\end{algorithm}

\noindent  Furthermore, Algorithm 4 can also be modified,
by removing in each round the vertex with the smallest 
number of $k$-cliques, to obtain  Corollary~\ref{cor:kapprox}. 
As the analogy of Theorem~\ref{thrm:3approx}.

\hide{ The adaptation is shown in Algorithm~6. 

 \begin{algorithm}[ht]
\begin{algorithmic}[1]
\caption{ \tt{Peel-$k$-cliques}$(G)$ }
\label{alg:alg6}
\STATE{$n\leftarrow |V|, H_n \leftarrow G$}
\FOR{$i\leftarrow n$ to $2$} 
\STATE{Let $v$ be the vertex of $G_i$ of minimum number of $k$-cliques} 
\STATE{$G_{i-1} \leftarrow G_i \backslash{v}$}
\ENDFOR
\STATE{Return  $H_j$ that achieves maximum $k$-clique density among $H_i$s, $i=1,\ldots,n$. }
\end{algorithmic}
\end{algorithm}
}

\begin{corollary}
\label{cor:kapprox}
The algorithm which peels off in each round the vertex with the minimum
number of $k$-cliques and returns the subgraph that achieves the largest 
$k$-clique density, is a $\frac{1}{k}$-approximation algorithm for the \KCDSP. 
\end{corollary}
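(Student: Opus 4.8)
The plan is to reproduce the argument of Theorem~\ref{thrm:3approx} essentially verbatim, replacing triangles by $k$-cliques, the count $t(\cdot)$ by $c_k(\cdot)$, and the density $\tau$ by $h_k$. Write $c_k^S(v)$ for the number of $k$-cliques induced by $S$ that contain the vertex $v$. First I would establish the analogue of the local optimality bound: if $S^*$ is an optimal set for the \KCDSP with $|S^*|=s^*$ and $v\in S^*$, then deleting $v$ cannot increase the density, so $h^*_k = c_k(S^*)/s^* \geq c_k(S^*\setminus\{v\})/(s^*-1)$, and since $c_k(S^*\setminus\{v\}) = c_k(S^*) - c_k^{S^*}(v)$ this rearranges to $c_k^{S^*}(v)\geq h^*_k$ for every $v\in S^*$.

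Next I would consider the iteration just before the peeling algorithm removes the first vertex $v$ belonging to $S^*$, and call the current vertex set $W$; by construction $S^*\subseteq W$. Because the algorithm always removes the vertex of minimum $k$-clique count, every $u\in W$ satisfies $c_k^W(u) \geq c_k^W(v) \geq c_k^{S^*}(v) \geq h^*_k$, where the middle inequality uses $S^*\subseteq W$ together with monotonicity of $c_k^S(v)$ under enlarging $S$. The only place the constant $k$ enters is the double-counting identity: summing $c_k^W(u)$ over all $u\in W$ counts each induced $k$-clique once for each of its $k$ vertices, so $c_k(W) = \frac{1}{k}\sum_{u\in W} c_k^W(u) \geq \frac{1}{k}|W|\,h^*_k$. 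Dividing by $|W|$ gives $h_k(W)\geq \frac{1}{k}h^*_k$, and since the algorithm returns the subgraph of maximum $k$-clique density among all those it encounters, its output $S$ satisfies $h_k(S)\geq h_k(W)\geq \frac{1}{k}h^*_k$.

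There is no substantive obstacle here: the argument is a mechanical lift of the $k=3$ case treated in Theorem~\ref{thrm:3approx}. The only two points that warrant a careful check are (i) that the per-vertex count $c_k^S(v)$ is monotone under enlarging $S$, which is immediate since any $k$-clique through $v$ counted in the smaller set is still counted in the larger one, and (ii) that the summation over $W$ overcounts each induced $k$-clique by exactly its number of vertices, namely $k$, rather than some other factor. Both are transparent, so the corollary follows directly as the promised analogue of Theorem~\ref{thrm:3approx}.
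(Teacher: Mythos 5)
Your proof is correct and matches the paper's intent exactly: the paper offers no separate argument for this corollary, stating only that it follows ``as the analogy of Theorem~\ref{thrm:3approx},'' and your mechanical lift of that proof---local optimality giving $c_k^{S^*}(v)\geq h^*_k$, the set $W$ just before the first vertex of $S^*$ is peeled, and the double-counting factor $\frac{1}{k}$---is precisely that analogy, with the two checkpoints (monotonicity of per-vertex clique counts and the overcount factor $k$) correctly verified.
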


Similarly, Algoritm~5 and the \mrc implementation can be modified 
to solve the \KCDSP.  We omit the details.

\begin{corollary}
\label{cor:kapprox}
The algorithm which peels off in each round the set of vertices 
with less than $k(1+\epsilon)h(S)$, where $h(S)$ is the $k$-clique
density in that round, terminates in $O(\log_{1+\epsilon}(n))$ rounds
and provides a $\frac{1}{k(1+\epsilon)}$-approximation guarantee for the \KCDSP.
Furthermore, using \cite{finocchi2014counting},  we obtain an efficient \mrc implementation. 
\end{corollary}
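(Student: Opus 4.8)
The plan is to mirror the proof of Lemma~\ref{lem:peeloff} essentially verbatim, replacing triangles by $k$-cliques throughout and the constant $3$ (the number of vertices in a triangle) by $k$ (the number of vertices in a $k$-clique). Write $c_S(v)$ for the number of $k$-cliques induced by $S$ in which $v$ participates, and recall the two combinatorial facts I will lean on: the handshake-type identity $\sum_{v\in S} c_S(v) = k\,c_k(S)$, since each induced $k$-clique is counted once for each of its $k$ vertices, and the monotonicity $c_{S^*}(v) \le c_W(v)$ whenever $S^*\subseteq W$, since enlarging the host set can only create more $k$-cliques through $v$.

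First I would establish the optimal-set inequality that plays the role of $t_{S^*}(v)\ge\tau^*_{G}$ in Theorem~\ref{thrm:3approx}: for an optimal set $S^*$ and any $v\in S^*$, removing $v$ cannot increase the density, so $h^*_k = c_k(S^*)/s^* \ge c_k(S^*\setminus\{v\})/(s^*-1)$, which rearranges using $c_k(S^*\setminus\{v\}) = c_k(S^*) - c_{S^*}(v)$ to $c_{S^*}(v) \ge h^*_k$. Next I would check that each round removes at least one vertex: if $A(S)=\emptyset$ then every $v\in S$ satisfies $c_S(v) > k(1+\epsilon)h_k(S)$, and summing against $\sum_{v\in S}c_S(v) = k\,|S|\,h_k(S)$ yields $k\,|S|\,h_k(S) > k(1+\epsilon)|S|\,h_k(S)$, a contradiction whenever $h_k(S)>0$.

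For the approximation guarantee I would examine the first round in which a vertex $v\in S^*$ is peeled, calling the current set $W$; since $v\in A(W)$ we have $c_W(v)\le k(1+\epsilon)h_k(W)$, and combining this with monotonicity and the optimal-set inequality gives the chain $k(1+\epsilon)h_k(W) \ge c_W(v) \ge c_{S^*}(v) \ge h^*_k$, so $h_k(W)\ge h^*_k/(k(1+\epsilon))$; because the algorithm returns the densest set it ever records and $W$ is among them, the output meets the claimed $\frac{1}{k(1+\epsilon)}$ bound. For the round count I would reuse the shrinkage argument: for $v\in S\setminus A(S)$ we have $c_S(v) > k(1+\epsilon)h_k(S)$, so $k\,|S|\,h_k(S) = \sum_{v\in S}c_S(v) \ge \sum_{v\in S\setminus A(S)}c_S(v) > k(1+\epsilon)|S\setminus A(S)|\,h_k(S)$, whence $|S\setminus A(S)| < |S|/(1+\epsilon)$ and the set contracts by a factor $1/(1+\epsilon)$ per round, giving $O(\log_{1+\epsilon}(n))$ passes.

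The \mrc realisation then follows the triangle template of Section~\ref{subsec:mapreduce}, with the $k$-clique counter of \cite{finocchi2014counting} substituted for the triangle counter used there. I do not expect a genuine obstacle: the argument is a clean transcription once the identity $\sum_{v\in S}c_S(v)=k\,c_k(S)$ and the inclusion-monotonicity of $c_S(v)$ are in place. The only points requiring care are the degenerate case $h_k(S)=0$, where termination must be confirmed directly (all vertices are peeled at once) rather than through the shrinkage bound, and keeping straight that it is $c_S(v)$, not $c_k$, that is monotone under $S^*\subseteq W$.
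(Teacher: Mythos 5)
Your proof is correct and is exactly what the paper intends: the paper omits the details of this corollary (``Similarly, Algorithm~5 and the \mrc implementation can be modified\ldots''), and your argument is the faithful transcription of the proof of Lemma~\ref{lem:peeloff}, with the handshake identity $\sum_{v\in S}c_S(v)=k\,c_k(S)$, the monotonicity of $c_S(v)$, and the optimal-set inequality $c_{S^*}(v)\ge h_k^*$ playing the same roles as their triangle counterparts. Your attention to the degenerate case $h_k(S)=0$ is a welcome refinement that the paper's own treatment of Lemma~\ref{lem:peeloff} glosses over.
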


\noindent We notice that in general there exist benefits from moving to higher order $k$ values.
Consider the following example which can be further formalized (details omitted). 
Let $G \sim G(n,p)$ be an Erd\"{o}s-R\'enyi graph, where $p=p(n)$. 
Assume that we plant a clique $K$ of size $n^{\gamma}$ for some constant $\gamma>0$.
We wish to show a non-trivial range of $p=p(n)$ values such that the following conditions hold: 

$$ h_2(C) =\frac{|E(K)|}{|K|}=\frac{{n^{\gamma} \choose 2}}{n^{\gamma}} < \frac{p {n \choose 2}}{n}=\Mean{h_2(V)}$$. %\frac{|E(V)|}{|V|}},$$ 

\noindent and for $k\geq 3$

$$ h_k(C) = \frac{{n^{\gamma} \choose k}}{n^{\gamma}} >\frac{p^{{k \choose 2}} {n \choose k}}{n}=\Mean{h_k(V)}$$% \Mean{\frac{c_k(V)}{|V|}}.$$ 

\noindent By simple algebraic manipulation we see that $p$ satisfies both conditions if

$$ O\big(n^{-(1-\gamma)}\big)< p < O\big( n^{-\tfrac{2}{k} (1-\gamma)} \big)\footnote{
Notice that for this range of $p$, the graph is connected 
and the clique number is constant with high probability \cite{bollobas}}.$$ 

Clearly, for larger $k$ values, we allow ourselves a larger range of $p$ values for which 
we can find the hidden clique in expectation.
We have implemented the algorithms for \KCDSP but we defer the experimental analysis 
on real graphs for an extended version of this work. Our main finding from preliminary
results with $k=4$, is that in few cases there exists a benefit to maximizing 
the {\em average $K_4$ density}. However, the gain obtained 
by moving from the \DSP to the \TDSP with respect to extracting  
a near-clique is larger than the gain by moving the \TDSP to the \fourCDS.

\begin{table*}[!ht]
\begin{center}
\begin{tabular}{|l|r|r|l|} \hline
   Name                     & Nodes       & Edges     &  Description  \\ \hline 
  \textsf{Adjnoun}          & 112         & 425       &  Generated by processing text data \\ \hline 
  \textsf{AS-735}           & 6\,475      & 12\,572   &  Autonomous Systems \\ \hline
  \textsf{AS-caida}         & 26\,475     & 53\,381   &  Autonomous Systems \\ \hline
  \textsf{ca-Astro}         & 17\,903     & 196\,972  &  Person to Person \\ \hline
  \textsf{ca-GrQC}          & 4\,158      & 13\,422   &  Person to Person \\ \hline
  \textsf{Celegans}         & 297         & 4\,296    &  Neural network of C. Elegans \\ \hline
  \textsf{DBLP}            & 53\,442         & 255\,936   &  Person to Person \\ \hline
  \textsf{Epinions}         & 75\,877     & 405\,739  &  Person to Person \\ \hline
  \textsf{Enron}            & 33\,696     & 180\,811  &  Email \\ \hline 
  \textsf{EuAll}            & 224\,832    & 339\,925  &  Email \\ \hline 
  \textsf{Football}         & 115         & 613       &  NCAA football game network \\ \hline
  \textsf{Karate}           & 34          & 78        &  Person to Person \\ \hline
  \textsf{Lesmis}           & 77          & 254       &  Generated by processing text data \\ \hline
  \textsf{Political blogs}  & 1\,490      & 16\,715   &  Generated by processing sales data \\ \hline  
  \textsf{Political books}  & 105         & 441       &  Blog network \\ \hline 
  \textsf{soc-Slashdot0811} & 77\,360     & 469\,180  &  Person to Person \\ \hline
  \textsf{soc-Slashdot0902} & 82\,168     & 504\,230  &  Person to Person \\ \hline
  \textsf{wb-cs-Stanford}   & 8\,929      & 26\,320   &  Web Graph (page to page) \\ \hline
\end{tabular}
\end{center}
\caption{Datasets used in our experiments.}
\label{tab:datasets}
\end{table*}

\section{Experimental Evaluation}
\label{sec:experiments}

\begin{table*}[!ht]
\begin{center}
\small
\begin{tabular}{|c|c|ccccccc|} \hline
                    Method                 & Measure                  & \textsf{Adjnoun}  & \textsf{Celegans} &  \textsf{Football} & \textsf{Karate} & \textsf{Lesmis} &  \textsf{Polblogs} & \textsf{Polbooks} \\ \hline
                                           & $\frac{|S|}{|V|}$(\%)     & 42.86             & 45.8              &  100               & 47.1            & 29.9            &   19.1                    &  51.4                    \\ 
                                           & $\delta$                  & 9.58              & 17.16             &  10.66             & 5.25            & 10.78           &   55.82                   &  9.40                    \\        
DS                                         & $f_e$                     & 0.20              & 0.13              &  0.094             & 0.35            & 0.49            &   0.196                   &  0.18                    \\
										   & $\tau$                    & 14                & 45.93             &  21.12             & 5.64            & 41.61           &   768.87                  &  22.68                    \\                            
										   & $f_t$                     & 0.013             & 0.005             &  0.003             & 0.05            &  0.18           &   0.019                   &  0.016                    \\ \hline  
                                           & $\frac{|S|}{|V|}$(\%)     & 41.1              & 42.4              &  100               & 52.9            & 29.9            &   18.7                    &  57.1                    \\ 
                                           & $\delta$                  & 9.57              & 17.1              &  10.66             & 5.2             & 10.78           &   55.8                    &  9.3                    \\        
$\frac{1}{2}$-DS                           & $f_e$                     & 0.21              & 0.14              &  0.094             & 0.31            & 0.49            &   0.20                    &  0.16                   \\
										   & $\tau$                    & 14.16             & 46.5              &  21.12             & 5.16            & 41.61           &   774.6                   &  22.68                    \\                            
										   & $f_t$                     & 0.014             & 0.006             &  0.003             & 0.04            &  0.18           &   0.02                    &  0.013                    \\ \hline  										   
                                           & $\frac{|S|}{|V|}$(\%)     & 36.6              & 10.4              &  15.7              & 17.7            & 16.9            &   8.1                    &  19.1                    \\ 
                                           & $\delta$                  & 9.37              & 13.81             &  8.22              & 4.67            & 10.62           &   55.72                   &  9.34                    \\        
TDS                                        & $f_e$                     & 0.23              & 0.46              &  0.48              & 0.93            & 0.89            &   0.46                   &  0.50                    \\
										   & $\tau$                    & 15                & 56.82             &  28                & 8.01            & 47.31           &   972.36                  &  25.95                    \\                            
										   & $f_t$                     & 0.019             &  0.13             & 0.21               &0.80             &  0.72           &   0.136                   &  0.15                    \\ \hline  				
										   & $\frac{|S|}{|V|}$(\%)     & 36.6              & 9.1               &  15.7              & 17.7            & 16.9            &   8.1                    &  15.2                    \\ 
                                           & $\delta$                  & 9.37              & 13.56             &  8.22              & 4.67            & 10.62           &   55.72                   &  9.13                    \\        
$\frac{1}{3}$-TDS                          & $f_e$                     & 0.23              & 0.52              &  0.48              & 0.93            & 0.89            &   0.46                   &  0.61                    \\
										   & $\tau$                    & 15                & 56.55             &  28                & 8.01            & 47.31           &   972.36                  &  25.5                    \\                            
										   & $f_t$                     & 0.019             &  0.17             & 0.21               &0.80             &  0.72           &   0.136                   &  0.24                    \\ \hline  						
\end{tabular}
\caption{\label{tab:motivational} Comparison of the extracted subgraphs by the 
Goldberg's exact algorithm for the \DSP (DS), Charikar's $\frac{1}{2}$-approximation 
algorithm ($\frac{1}{2}$-DS), our exact algorithm for the \TDSP (TDS) and 
our $\frac{1}{3}$-approximation algorithm ($\frac{1}{2}$-TDS). Here, $f_e(S)= e(S) / {|S| \choose 2}$ is the edge density of the extracted subgraph,  
$\delta(S)=2e(S)/|S|$ is the average degree,
$f_t(S) = t(S) / {|S| \choose 3}$ is the triangle density
and $\tau(S)=3t(S)/ |S|$ is the average number of triangles per vertex.
}
\end{center}
\end{table*}

The main goal of this Section is to show that 
the proposed algorithms for the \TDSP constitute new graph mining primitives that can be used 
to find near-cliques when the \DSP fails. 
Additionally to this goal, we compare the quality of the $\frac{1}{3}$-approximation 
algorithm (Algorithm~4) to the optimal algorithm. 
Finally, we explore the trade-off between the approximation quality 
and the number of rounds by ranging the parameter $\epsilon$ in Algorithm~5.

\subsection{Experimental Setup} 
\label{subsec:setup}

The datasets we use are shown in Table~\ref{tab:datasets}.
The experiments were performed on a single machine, with Intel(R) Core(TM) i5 CPU
at 2.40 GHz, with 3.86GB of main memory.  
We have implemented Algorithm~1 in \textsc{Matlab} R2011a using a maximum flow
implementation due to Kolmogorov and Boykov \cite{boykov2004experimental} as our subroutine
which runs in time $O( t(n+t)^3 )$. This implementation can be prohibitively expensive even for 
small graphs which have a large number of triangles.
%$O(n^3)$ triangles: for instance, in a small social network 
%representing connections among 198 jazz musicians, we found a total of 143\,192 triangles. 
In the next section we evaluate the exact algorithm on a subset of graphs.
%unfortunately, for the rest the Matlab on my laptop runs out of memory 

The space usage due to the construction
of the network $H_{\alpha}$ -which has $O(n+t)$ vertices and $O(n+t)$ arcs-
can be large as many networks have a large number of triangles. 
It is worth outlining that when the space usage is a problem whereas the running
time is not, the supermodularity algorithm can be used instead.  
Furthermore, it is also worth mentioning that using any standard maximum flow algorithm rather than the 
algorithm of Ahuja et al. \cite{ahuja1994improved} results in an 
expensive algorithm which runs in $\tilde{\Omega}((n+t)^2)$ time\footnote{
The state-of-the-art algorithm for exact maximum flow is due to Orlin and 
runs in $O(nm)$ where $n,m$ are the number of vertices and edges in the graph. 
In our network we have $O(n+t)$ vertices  and $O(n+t)$ edges with integer 
capacities, resulting in a total $O((n+t)^2)$ time.}.We have written an efficient implementation of 
our peeling algorithm in \textsc{Java JDK 1.6} which runs in $O(nm)$ time.
Our implementation maintains an array of size $O(n)$ containing the counts
of triangles per vertex and an array of at most $O(\max_v t_v)$ entries 
each one pointing to a hash table (notice there exist at most $n$ entries with non-empty hash tables). 
The hash table at position $i$ of the array keeps the set of vertices with exactly $i$ participating 
triangles. At any iteration, we maintain the minimum index
of the array pointing to a non-empty hash table. When we remove a vertex, we update
the triangle counts of its neighbors, move them and place them in the appropriate
hash table if needed, and if one of the updated counts is less than the number of
triangles that the index points at, then we update the index accordingly. 
The total running time is $O\Big({deg(v) \choose 2}\Big)=O(nm)$.
We measure the quality of each extracted subgraph by two measures:   
the edge density of the extracted subgraph $f_e= e(S) / {|S| \choose 2}$
and the triangle density $f_t = t(S) / {|S| \choose 3}$.  
Notice that when $f_e,f_t$ are close to 1, the extracted subgraph is close 
to being a clique.

\subsection{Experiments}
\label{subsec:experiments} % \subsection{The \TDSP results in near-cliques } \label{subsec:exactalg}

Table~\ref{tab:motivational} shows the results obtained on several popular
small- and medium-sized graphs. Each column corresponds to a dataset. 
The rows correspond to measurements for each method we use to extract
a subgraph. Specifically, the first (DS), second ($\tfrac{1}{2}$-DS), third (TDS) and fourth ($\tfrac{1}{3}$-TDS) 
row corresponds to the subgraph extracted by Goldberg's exact algorithm \cite{Goldberg84} for the \DSP, 
Charikar's $\tfrac{1}{2}$-approximation algorithm \cite{Char00} for the \DSP, Algorithm 1 
and Algorithm 4 for the \TDSP respectively.
For each optimal extracted subgraph $S$, we show 
its size as a fraction of the total number of vertices, 
the edge density $f_e(S)$, the average degree $\delta(S)=2e(S)/|S|$,
the triangle density $f_t(S)$ and  the average number of triangles per vertex $\tau(S)=3t(S)/ |S|$.
We observe that for all datasets,  the optimal \textsf{\TDS} is close to being a near-clique
while the optimal \textsf{\DS} is not always so. A pronounced example is the \textsf{Football}
network where the optimal \textsf{\DS} is the whole network with $f_e=0.0094$, whereas the optimal \textsf{\TDS}
is a set of 18 vertices with edge density 0.48.
Finally, we observe that the quality of Algorithm's~4 output is very close to the optimal
solution and sometimes even better. It is worth mentioning that the same phenomenon 
is observed in the case of Charikar's $\frac{1}{2}$-approximation algorithm \cite{Char00}
compared to Goldberg's exact algorithm \cite{Goldberg84}.

\hide{
\begin{table}[!ht]
\begin{center}
\small
\begin{tabular}{r|lllll|lllll|lllll|lllll|}
\multicolumn{1}{c}{} &  \multicolumn{5}{c}{\textsf{densest subgraph}} &  \multicolumn{5}{c}{\textsf{\TDS  }} \\
\cline{2-11}
                                        &
\multicolumn{1}{c}{$\frac{|S|}{|V|}$(\%)}   &
\multicolumn{1}{c}{$\delta$}            &
\multicolumn{1}{c}{$f_e$}               &
\multicolumn{1}{c}{$\tau$}              &
\multicolumn{1}{c|}{$f_t$}              &
\multicolumn{1}{c}{$\frac{|S|}{|V|}$(\%)}   &
\multicolumn{1}{c}{$\delta$}            &
\multicolumn{1}{c}{$f_e$}               &
\multicolumn{1}{c}{$\tau$}              &
\multicolumn{1}{c|}{$f_t$}              \\ 
\cline{2-11}  
 \textsf{Adjnoun}  & 51.8 & 9.52    &     0.17 &     12.84 &     0.008 &    36.6 &   9.37 &     0.23 &     15 &    0.019\\
 \textsf{Celegans} & 45.8 & 17.16   &  0.13  & 45.93 &   0.005 & 10.4   & 13.81   & 0.46  & 56.82 &   0.13 \\
 \textsf{Football} & 100  & 10.66   &  0.094   &  21.12  &  0.003 &  15.7   &  8.22   &  0.48 &   28   &  0.21 \\
 \textsf{Karate} & 47.1   &  5.25   & 0.35   & 5.64  &  0.05&   17.7 &  4.67   & 0.93 &   8.01 &  0.8 \\
 \textsf{Lesmis} & 29.9  &  10.78  &   0.49  &  41.61 &   0.18&   16.9 &  10.62 &    0.89 &  47.31 &   0.72 \\
 \textsf{Political blogs} & 19.1   & 55.82   & 0.196 & 768.87 &  0.019& 8.1  &  55.72 &   0.46  &972.36 &   0.136\\
 \textsf{Political books} &    51.4    & 9.40&    0.18&    22.68 &   0.016 &    19.1   & 9.34 &    0.50&    25.95 &    0.15 \\% 21.9   & 9.40 &    0.49&    8.65&    0.15 \\ 
\cline{2-11}   
\end{tabular}
\caption{\label{tab:motivational} The \TDSP results always in subgraphs 
which are closer to near-cliques  compared to the \DSP. 
Here, $f_e(S)= e(S) / {|S| \choose 2}$ is the edge density of the extracted subgraph,  
$\delta(S)=2e(S)/|S|$ is the average degree,
$f_t(S) = t(S) / {|S| \choose 3}$ is the triangle density
and $\tau(S)=3t(S)/ |S|$ is the average number of triangles per vertex. }
\end{center}
\end{table}
}

\hide{
\subsection{How close is the $\frac{1}{3}$-approximation to the optimal solution?}
\label{subsec:comparison} 

In this Section we check how the $\frac{1}{3}$-approximation Algorithm 4
compares to Algorithm 1, an exact algorithm for the \TDSP. 
The results are shown in Table~\ref{tab:tdscomparison}.
We observe that the quality of Algorithm's~4 output is very close to the optimal
solution. It is worth mentioning that the same phenomenon 
is observed in the case of Charikar's $\frac{1}{2}$-approximation algorithm \cite{Char00}
compared to Goldberg's exact algorithm \cite{Goldberg84}. 
The results are shown in Table~\ref{tab:dscomparison}. 
}

\hide{ 
\begin{table*}[!ht]
\begin{center}
 \small
\begin{tabular}{r|lllll|lllll|lllll|lllll|}
\multicolumn{1}{c}{} &  \multicolumn{5}{c}{\textsf{densest subgraph}} &  \multicolumn{5}{c}{\textsf{\TDS  }} \\
\cline{2-11}
                                        &
\multicolumn{1}{c}{$\frac{|S|}{|V|}$(\%)}   &
\multicolumn{1}{c}{$\delta$}            &
\multicolumn{1}{c}{$f_e$}               &
\multicolumn{1}{c}{$\tau$}              &
\multicolumn{1}{c|}{$f_t$}              &
\multicolumn{1}{c}{$\frac{|S|}{|V|}$(\%)}   &
\multicolumn{1}{c}{$\delta$}            &
\multicolumn{1}{c}{$f_e$}               &
\multicolumn{1}{c}{$\tau$}              &
\multicolumn{1}{c|}{$f_t$}              \\ 
\cline{2-11}  
 \textsf{Adjnoun}  & 51.8 & 9.52    &     0.17 &     12.84 &     0.008 &    36.6 &   9.37 &     0.23 &     15 &    0.019\\
 \textsf{Celegans} & 45.8 & 17.16   &  0.13  & 45.93 &   0.005 & 10.4   & 13.81   & 0.46  & 56.82 &   0.13 \\
 \textsf{Football} & 100  & 10.66   &  0.094   &  21.12  &  0.003 &  15.7   &  8.22   &  0.48 &   28   &  0.21 \\
 \textsf{Karate} & 47.1   &  5.25   & 0.35   & 5.64  &  0.05&   17.7 &  4.67   & 0.93 &   8.01 &  0.8 \\
 \textsf{Lesmis} & 29.9  &  10.78  &   0.49  &  41.61 &   0.18&   16.9 &  10.62 &    0.89 &  47.31 &   0.72 \\
 \textsf{Political blogs} & 19.1   & 55.82   & 0.196 & 768.87 &  0.019& 8.1  &  55.72 &   0.46  &972.36 &   0.136\\
 \textsf{Political books} &    51.4    & 9.40&    0.18&    22.68 &   0.016 &    19.1   & 9.34 &    0.50&    25.95 &    0.15 \\% 21.9   & 9.40 &    0.49&    8.65&    0.15 \\ 
\cline{2-11}   
\end{tabular}
\caption{\label{tab:motivational} The \TDSP results always in subgraphs 
which are closer to near-cliques  compared to the \DSP. 
Here, $f_e(S)= e(S) / {|S| \choose 2}$ is the edge density of the extracted subgraph,  
$\delta(S)=2e(S)/|S|$ is the average degree,
$f_t(S) = t(S) / {|S| \choose 3}$ is the triangle density
and $\tau(S)=3t(S)/ |S|$ is the average number of triangles per vertex. }
\end{center}
\begin{center}
\small
\begin{tabular}{r|lllll|lllll|lllll|lllll|}
\multicolumn{1}{c}{} &  \multicolumn{5}{c}{$\frac{1}{3}$-\textsf{\TDS}} &  \multicolumn{5}{c}{\textsf{\TDS  }} \\
\cline{2-11}
                                        &
\multicolumn{1}{c}{$\frac{|S|}{|V|}$(\%)}   &
\multicolumn{1}{c}{$\delta$}            &
\multicolumn{1}{c}{$f_e$}               &
\multicolumn{1}{c}{$\tau$}              &
\multicolumn{1}{c|}{$f_t$}              &
\multicolumn{1}{c}{$\frac{|S|}{|V|}$(\%)}   &
\multicolumn{1}{c}{$\delta$}            &
\multicolumn{1}{c}{$f_e$}               &
\multicolumn{1}{c}{$\tau$}              &
\multicolumn{1}{c|}{$f_t$}              \\ 
\cline{2-11}      
 \textsf{Adjnoun}  &   36.6 &   9.37 &     0.23 &     15 &    0.019 &    36.6 &   9.37 &     0.23 &     15 &    0.019\\
 \textsf{Celegans} & 9.1   & 13.56   & 0.52  & 56.55 &   0.17 & 10.4   & 13.81   & 0.46  & 56.82 &   0.13 \\
  \textsf{Football}&  15.7   &  8.22   &  0.48 &   28 &  0.21 &  15.7   &  8.22   &  0.48 &  28   &  0.21 \\
  \textsf{Karate}  & 17.7  &  4.67   & 0.93 &   8.01  &  0.8 &   17.7  &  4.67   & 0.93 &   8.01  &  0.8 \\
 \textsf{Lesmis}   &  16.9 &  10.62 &    0.89 &  47.31 &   0.72 &   16.9 &  10.62 &    0.89 &  47.31 &   0.72 \\
  \textsf{Political blogs} & 8.1  &  55.72 &   0.46  &972.36 &   0.136 & 8.1  &  55.72 &   0.46  & 972.36 &   0.136\\
 \textsf{Political books} &    15.2    & 9.13&    0.61&    25.5&   0.24 &    19.1   & 9.34 &    0.50&    25.95&    0.15 \\ 
\cline{2-11}    
\end{tabular}
\caption{\label{tab:tdscomparison} Comparison of the extracted subgraphs by 
the $\frac{1}{3}$-approximation algorithm (Algorithm 4) and the exact algorithm
(Algorithm 1) for the \TDSP. We observe that the quality of the approximation
algorithm is close to the optimal solution.}
\end{center}

\begin{center}
\small
\begin{tabular}{r|lllll|lllll|lllll|lllll|}
\multicolumn{1}{c}{} &  \multicolumn{5}{c}{$\frac{1}{2}$-\textsf{densest subgraph}} &  \multicolumn{5}{c}{\textsf{densest subgraph}} \\
\cline{2-11}
                                        &
\multicolumn{1}{c}{$\frac{|S|}{|V|}$(\%)}   &
\multicolumn{1}{c}{$\delta$}            &
\multicolumn{1}{c}{$f_e$}               &
\multicolumn{1}{c}{$\tau$}              &
\multicolumn{1}{c|}{$f_t$}              &
\multicolumn{1}{c}{$\frac{|S|}{|V|}$(\%)}   &
\multicolumn{1}{c}{$\delta$}            &
\multicolumn{1}{c}{$f_e$}               &
\multicolumn{1}{c}{$\tau$}              &
\multicolumn{1}{c|}{$f_t$}              \\ 
\cline{2-11}    
 \textsf{Adjnoun}&        41.1 &   9.57 &    21.3 &    14.16&    0.014 &51.8&    9.52 &     0.17 &     12.84 &     0.008 \\
 \textsf{Celegans} &    42.4 &   17.1    &  0.14&    46.50  &   0.0060 & 45.8   & 17.16   &  0.13  & 45.93 &   0.005 &   \\
  \textsf{Football} &  100   & 10.66   &  0.094   &21.12 &  0.003 &  100   & 10.66   &  0.094   & 21.12  &  0.003 \\
  \textsf{Karate} &  52.9 &    5.2 &    0.31 &     5.16 &   0.04 & 47.1  &  5.25   & 0.35   & 5.64  &  0.05 \\
  \textsf{Lesmis} & 29.9  &  10.78  &   0.49  &    41.61 &   0.18 &29.9  &  10.78  &   0.49  &  41.61 &   0.18    \\
  \textsf{Political blogs} &     18.7  & 55.8 &    0.20 &  774.60 &    0.02 & 19.1   & 55.82   & 0.196 & 768.87 &  0.019 \\
 \textsf{Political books} &  57.1  &  9.3   & 0.16  & 22.20 &  0.013 &   51.4    & 9.40&    0.18&    22.68 &   0.016  \\
\cline{2-11}    
\end{tabular}
\caption{\label{tab:dscomparison} Comparison of the extracted subgraphs by 
the $\frac{1}{2}$-approximation algorithm   and Goldberg's exact algorithm 
for the \DSP. Notice, as it happens for the \TDSP, that the quality of the approximation algorithm 
is close to the optimal solution.} 
\end{center}
\end{table*}
}

\hide{

\begin{table}[t]
\begin{center}
\small
\begin{tabular}{r|lllll|lllll|lllll|lllll|}
\multicolumn{1}{c}{} &  \multicolumn{5}{c}{$\frac{1}{2}$-\textsf{densest subgraph}} &  \multicolumn{5}{c}{\textsf{densest subgraph}} \\
\cline{2-11}
                                        &
\multicolumn{1}{c}{$\frac{|S|}{|V|}$(\%)}   &
\multicolumn{1}{c}{$\delta$}            &
\multicolumn{1}{c}{$f_e$}               &
\multicolumn{1}{c}{$\tau$}              &
\multicolumn{1}{c|}{$f_t$}              &
\multicolumn{1}{c}{$\frac{|S|}{|V|}$(\%)}   &
\multicolumn{1}{c}{$\delta$}            &
\multicolumn{1}{c}{$f_e$}               &
\multicolumn{1}{c}{$\tau$}              &
\multicolumn{1}{c|}{$f_t$}              \\ 
\cline{2-11}    
 \textsf{Adjnoun}&        41.1 &   9.57 &    21.3 &    14.16&    0.014 &51.8&    9.52 &     0.17 &     12.84 &     0.008 \\
 \textsf{Celegans} &    42.4 &   17.1    &  0.14&    46.50  &   0.0060 & 45.8   & 17.16   &  0.13  & 45.93 &   0.005 &   \\
  \textsf{Football} &  100   & 10.66   &  0.094   &21.12 &  0.003 &  100   & 10.66   &  0.094   & 21.12  &  0.003 \\
  \textsf{Karate} &  52.9 &    5.2 &    0.31 &     5.16 &   0.04 & 47.1  &  5.25   & 0.35   & 5.64  &  0.05 \\
  \textsf{Lesmis} & 29.9  &  10.78  &   0.49  &    41.61 &   0.18 &29.9  &  10.78  &   0.49  &  41.61 &   0.18    \\
  \textsf{Political blogs} &     18.7  & 55.8 &    0.20 &  774.60 &    0.02 & 19.1   & 55.82   & 0.196 & 768.87 &  0.019 \\
 \textsf{Political books} &  57.1  &  9.3   & 0.16  & 22.20 &  0.013 &   51.4    & 9.40&    0.18&    22.68 &   0.016  \\
\cline{2-11}    
\end{tabular}
\caption{\label{tab:dscomparison} Comparison of the extracted subgraphs by 
the $\frac{1}{2}$-approximation algorithm   and Goldberg's exact algorithm 
for the \DSP. Notice, as it happens for the \TDSP, that the quality of the approximation algorithm 
is close to the optimal solution.} 
\end{center}
\end{table}
}

We use the scalable \textsc{Java} implementation of Algorithm~4
and a scalable implementation of Charikar's $\frac{1}{2}$-approximation algorithm 
on the rest of the datasets of Table~\ref{tab:datasets}.
The results are shown in Table~\ref{tab:morereal}. 
Again, we verify the fact that the \TDSP results in near-cliques,
even when the \DSP fails. 
For instance,  for the collaboration network \textsf{ca-Astro} 
the \DSP results  in a subgraph with 1\,184 vertices with $f_e=0.05,f_t=0.002$.
The \TDSP results in a clique with 57 vertices.
The experimental results in Tables~\ref{tab:motivational} and~\ref{tab:morereal}
strongly indicate that the algorithms developed in this work consitute graph mining 
primitives that can be used to extract near-cliques when the \DSP problem fails to do so.

\begin{table}[t]
\begin{center}
\small
\begin{tabular}{r|ccc|ccc|}
\multicolumn{1}{c}{} &  \multicolumn{3}{c}{$\frac{1}{2}$-DS} &  \multicolumn{3}{c}{$\frac{1}{3}$-TDS} \\
\cline{2-7}
                               &
\multicolumn{1}{c}{$|S|$}      &
\multicolumn{1}{c}{$f_e$}      &
\multicolumn{1}{c|}{$f_t$}     &
\multicolumn{1}{c}{$|S|$}      &
\multicolumn{1}{r}{$f_e$}      &
\multicolumn{1}{c|}{$f_t$}              \\ 
\cline{2-7}  
\textsf{AS-735}           & 59          &  0.28     &  0.08   & 13 & 0.8  &  0.66  \\ 
\textsf{AS-caida}         & 143         & 0.14      &  0.02   & 27 & 0.52  & 0.25 \\ 
\textsf{ca-Astro}         & 1\,184      & 0.05      &  0.002  & 57 & 1     & 1 \\ 
\textsf{ca-GrQC}          & 42          & 0.79      &  0.68   & 14 & 0.89  & 0.84 \\ 
\textsf{Epinions}         & 718         & 0.27      &  0.10   & 135& 0.60  & 0.33 \\ 
\textsf{Enron}            & 192         & 0.30      &  0.07   &139 & 0.40  & 0.12 \\ 
\textsf{EuAll}            & 248         & 0.20      &  0.01   &108 & 0.40  & 0.18 \\ 
\textsf{soc-Slashdot0811} & 1\,637      & 0.29      &  0.08   &253 & 0.52  & 0.29 \\ 
\textsf{soc-Slashdot0902} & 1\,787      & 0.28      &  0.07   &247 & 0.49  & 0.23 \\ 
\textsf{wb-cs-Stanford}   &  84         & 0.64      &  0.48   & 26 & 0.80  & 0.67 \\ 
\cline{2-7}    
\end{tabular}
\caption{\label{tab:morereal}Comparison of the extracted subgraphs by 
the $\frac{1}{2}$-approximation algorithm of Charikar and 
the $\frac{1}{3}$-approximation algorithm, Algorithm 4.}
\end{center}
\end{table}

\subsection{Exploring parameter $\epsilon$ in Algorithm~5}
\label{subsec:epsilon} 

In this Section we present the results of Algorithm~5 on the DBLP graph.
This is particularly interesting instance as it indicates that 
instead of thinking for to select a good $\epsilon$ value,
it is worth trying out at least few  when resources are available. 
We range $\epsilon$ from 0.1 to 1.8 with a step of 0.1. 	
Figure~\ref{fig:peeling}(a) plots the number of rounds 
Algorithm~5 takes to terminate as a function of $\epsilon$. 
We observe that even for small $\epsilon$  values the number of rounds
is 6. The reader should compare this to the upper bound predicted by 
Lemma~\ref{lem:peeloff} which exceeds 100.
Figure~\ref{fig:peeling}(b) plots the relative ratio $Rel.~\tau=\frac{\tau(S)}{\tau_G^*}$
where $S$ is the output of Algorithm~5. For convenience, the lower bound $\frac{1}{3+3\epsilon}$ 
is plotted with red color.
Similarly, Figure~\ref{fig:peeling}(c) plots the relative ratios $\frac{f_e(S)}{f_e(S^*)},\frac{f_t(S)}{f_t(S^*)}$
as a function of $\epsilon$.  As we observe, the quality of Algorithm~5 is close to the optimal
solution except for $\epsilon=0.7$ and $\epsilon=0.8$. By inspecting why this happens 
we observe that the optimal \textsf{\TDS} is a clique of 44 vertices. 
It turns out that for   $\epsilon=0.7,0.8$ the optimal subgraph which is found
in the last round of the execution of the algorithm (the latter happens for all $\epsilon$ values)
consists of 98 and 74 vertices which contain as a subgraph the optimal $K_{44}$. 
For other values of $\epsilon$, the subgraph in the last round is either
the optimal $K_{44}$ or close to it, with few more extra vertices.
This example shows the potential danger of using  a single value for $\epsilon$,
suggesting that trying out a small number of $\epsilon$ values can be 
significantly beneficial in terms of the approximation quality.

\begin{figure*}[htp]
\centering
\begin{tabular}{@{}c@{}@{\ }c@{}@{\ }c@{}}
\includegraphics[width=0.33\textwidth]{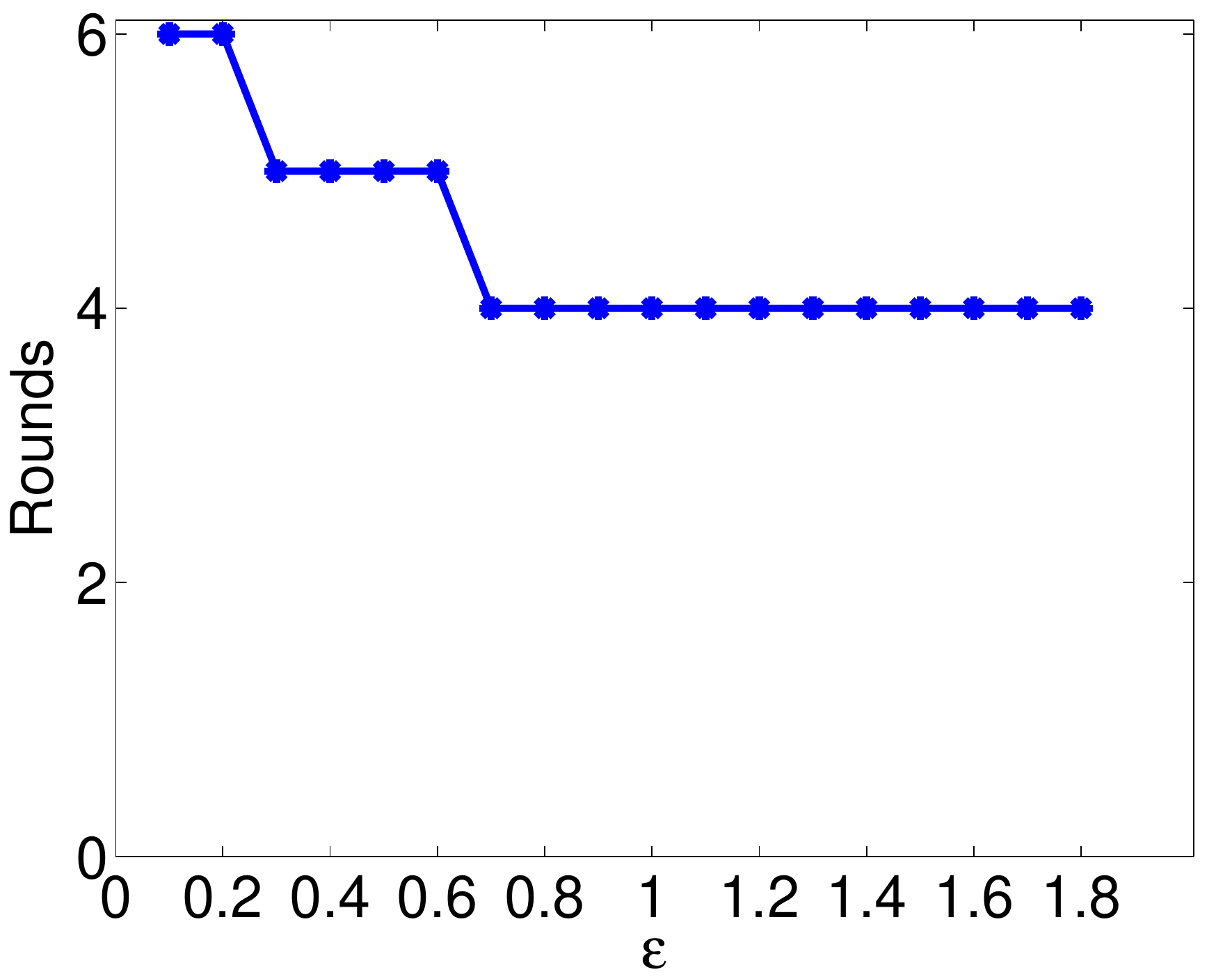} & \includegraphics[width=0.33\textwidth]{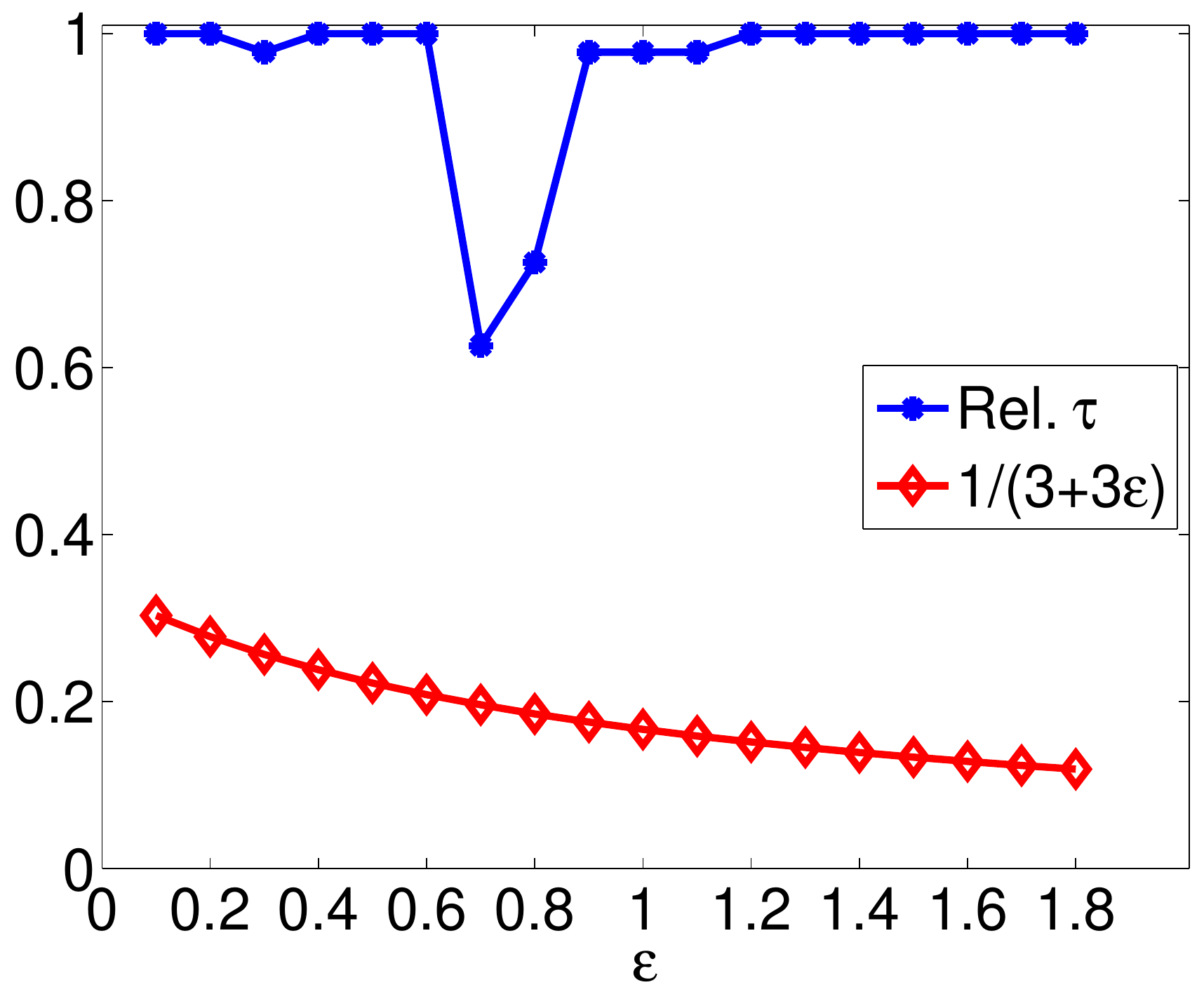}    & \includegraphics[width=0.33\textwidth]{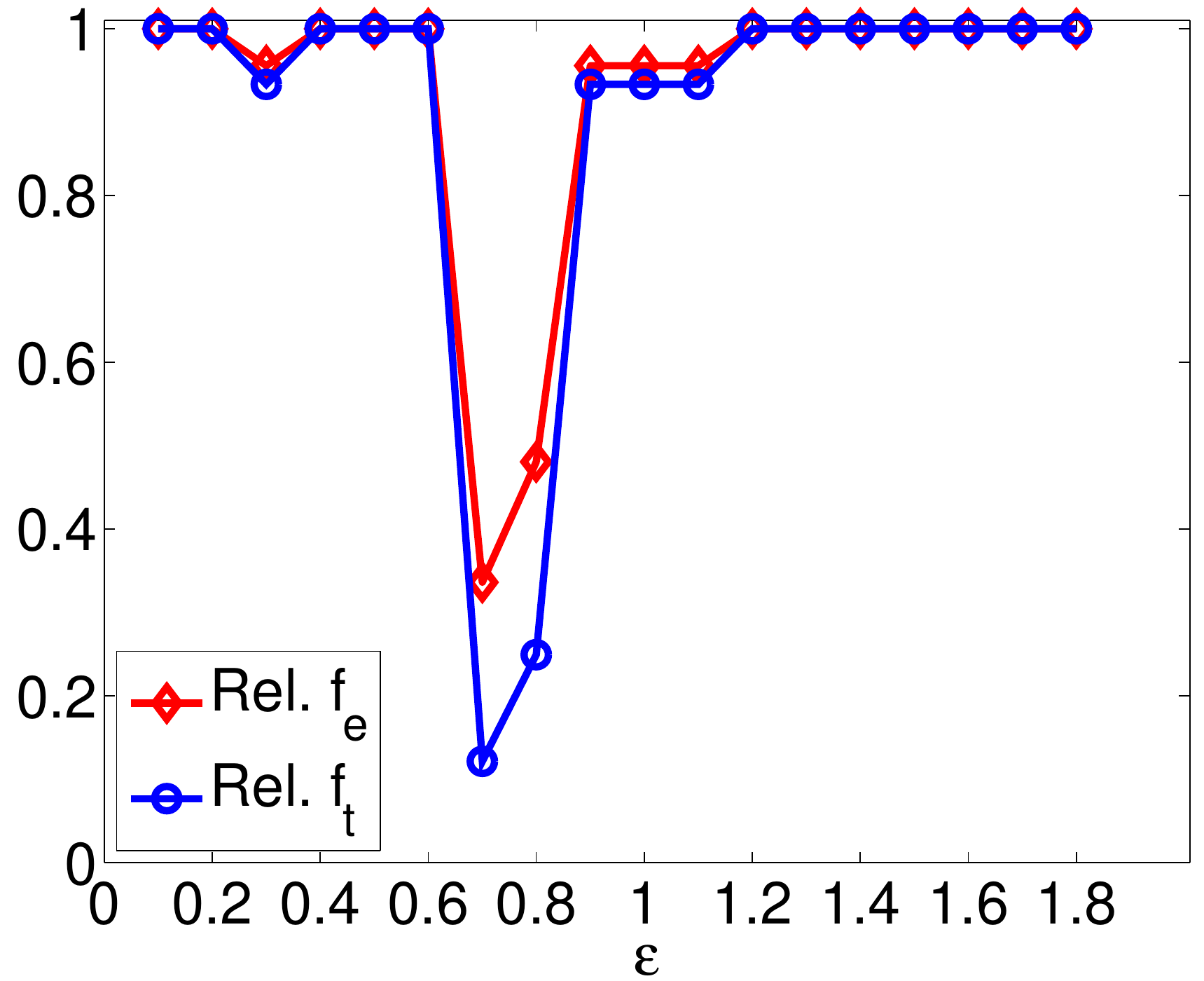}       \\
(a) & (b) & (c)
\end{tabular}
\caption{\label{fig:peeling} Exploring the trade-off between the number of rounds
and accuracy as a function of the parameter $\epsilon$ for Algorithm~5. 
Let $S,S^*$ be the extracted subgraphs by Algorithms~5 and 1 respectively.
(a) Number of rounds, (b) relative average triangle density ratio 
$\frac{\tau(S)}{\tau_G^*}$ (blue $*$) 
and the approximation guarantee $1/(3+3\epsilon)$ (red $ \diamond $), and (c) relative ratios 
$\frac{f_e(S)}{f_e(S^*)},\frac{f_t(S)}{f_t(S^*)}$ as functions of $\epsilon$. 
}
%\vspace{-4mm}
\end{figure*}

\section{Application: Organizing Cocktail Parties}
\label{sec:application}
A graph mining problem that comes up in various applications 
is the following: given a set of vertices $Q \subseteq V$, find 
a dense subgraph containing $Q$
We refer to this type of graph mining problems as {\it cocktail problems}, 
due to the following motivation, c.f. \cite{sozio2010community}. 
Suppose that a set of people $Q$ wants to organize a cocktail party.
How do they invite other people to the party so that the 
set of all the participants, including $Q$, are as similar as possible? 
A variation of the \TDSP which addresses this graph mining problem 
follows.  

\begin{problem}[\CTDSP]
Given $G(V,E)$ and $Q \subseteq V$, find the subset of vertices $S^*$ that maximizes 
the triangle density such that $Q \subseteq S^*$ ,

$$S^*=\arg\max_{Q \subseteq S \subseteq V} \tau(S).$$
\end{problem}

The \CTDSP can be solved by modifying our proposed algorithms 
accordingly. A useful corollary follows. 

\begin{corollary} 
The \CTDSP is solvable in polynomial time by adding arcs from $s$ to $v \in A$
of large enough capacities, e.g., capacities equal to $n^3+1$ are sufficiently large. 
Furthermore, the peeling algorithm which avoids removing vertices from $Q$
is a $\frac{1}{3}$-approximation algorithm for the \CTDSP. 
\end{corollary}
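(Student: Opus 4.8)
The plan is to prove the two assertions separately, each by a small modification of the machinery already built for the \TDSP. For the exact algorithm I would reuse the network $H_\alpha$ of Algorithm~2 and add, for every $v \in Q$, an arc $(s,v)$ of capacity $C$ large enough to pin $v$ to the source side of every minimum $st$-cut. To see that this forces $Q \subseteq A_1$ (in the notation $A_1 = S \cap A$ of Lemma~\ref{lem:stcut}), I would argue locally: by the cut-structure analysis of Lemma~\ref{lem:stcut}, moving a single vertex from the source side to the sink side changes the finite part of the cut by saving at most its sink-arc capacity $3\alpha$ together with $O(t_v)$ in triangle terms, while it severs the new arc of capacity $C$; choosing $C$ to exceed $3\alpha$ for every threshold $\alpha$ queried during the binary search (a polynomial value of order $n^3$) makes any such move strictly increase the cut, so no minimum cut ever places a vertex of $Q$ on the sink side. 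Once $Q \subseteq A_1$ is guaranteed, the forcing arcs are never cut and contribute nothing, so the algebra in the proof of Lemma~\ref{lem:binary} goes through verbatim with the minimization now restricted to sets $A_1 \supseteq Q$: the cut value equals $3t - 3\big(t(A_1)-\alpha|A_1|\big)$, and hence minimizing the cut maximizes $t(A_1)-\alpha|A_1|$ over all $A_1 \supseteq Q$. The binary search of Algorithm~1 then converges exactly to $\max_{Q \subseteq S}\tau(S)$ by the same termination argument (Claims~1 and~2), establishing polynomial-time exact solvability.

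For the $\tfrac13$-approximation I would mirror the proof of Theorem~\ref{thrm:3approx}, now running the peeling so that vertices of $Q$ are never removed; every iterate therefore contains $Q$ and is feasible. Let $S^*$ be an optimal \CTDSP solution with value $\tau^* = \max_{Q \subseteq S}\tau(S)$, and let $v$ be the first vertex of $S^*$ that the algorithm deletes. Because $Q$ is protected, necessarily $v \in S^* \setminus Q$, so the exchange step of Theorem~\ref{thrm:3approx} applies: $S^*\setminus\{v\}$ is still feasible, giving $t_{S^*}(v) \ge \tau^*$, and by monotonicity of triangle counts under supersets $t_W(v) \ge t_{S^*}(v) \ge \tau^*$, where $W$ is the iterate just before $v$ is removed. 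Since $v$ minimizes the triangle count among the removable (non-$Q$) vertices of $W$, every $u \in W \setminus Q$ satisfies $t_W(u) \ge \tau^*$.

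The main obstacle is the final averaging step. In the unconstrained proof one uses $3t(W) = \sum_{u \in W} t_W(u) \ge |W|\tau^*$, but here this lower bound is only available for the removable vertices $W \setminus Q$; a protected vertex of $Q$ may sit in $W$ with a small triangle count precisely because it was never allowed to be peeled. I would close this gap by supplementing the averaging bound $3t(W) \ge |W\setminus Q|\,\tau^*$ with the superset bound $t(W) \ge t(S^*) = |S^*|\,\tau^*$ (valid since $S^* \subseteq W$), and by exploiting that $Q \subseteq S^*$ forces the same vertices into the optimal set, so they also enlarge the denominator defining $\tau^*$; the degenerate regime $S^* = Q$, where the naive averaging is weakest, is covered directly because the set $Q$ itself is among the iterates examined by the algorithm and attains $\tau(Q) = \tau^*$ in that case. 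Combining these bounds to recover the clean factor $\tfrac13$, rather than a weaker constant, is the delicate part of the argument, and is exactly where the presence of the anchor set $Q$ makes the constrained problem differ from the plain \TDSP.
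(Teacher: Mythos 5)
The paper states this corollary without proof, so your attempt has to stand on its own. Your first half (the exact algorithm) is essentially the intended construction and is sound: forcing arcs whose capacity dominates $3\alpha$ for every queried $\alpha$ pin $Q$ to the source side, and the cut value then equals $3t-3\bigl(t(A_1)-\alpha|A_1|\bigr)$ with the maximization restricted to $A_1 \supseteq Q$. (Your requirement $C>3\alpha$ is in fact more careful than the corollary's ``$n^3+1$,'' since Algorithm~1 queries $\alpha$ up to $n^3$.) One caveat you gloss over: once the forcing arcs are present, the source-only cut $(\{s\},A\cup B\cup\{t\})$ is \emph{never} minimal, so Algorithm~1's branching test ``$S=\{s\}$'' must be replaced (e.g., by comparing the min-cut value against $3t$, i.e., testing whether the returned $A_1$ satisfies $t(A_1)>\alpha|A_1|$); run verbatim, the binary search would always take the ``else'' branch, push $l$ up to $n^3$, and end up returning $Q$ itself. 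Your cut-value formula implicitly contains the correct test, but the necessary change to the algorithm should be stated.

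The second half has a genuine gap, which you yourself flag: you never establish the factor $\frac{1}{3}$, and the two bounds you propose to combine cannot give it. Taking the best of the averaging bound $3t(W)\ge (|W|-|Q|)\,\tau^*$ and the superset bound $t(W)\ge |S^*|\,\tau^*$ and minimizing over $|W|$, the worst case occurs at $|W|=3|S^*|+|Q|$, where both bounds equal $\frac{|S^*|}{3|S^*|+|Q|}\,\tau^*$; since $|Q|$ can be as large as $|S^*|$, this certifies only (roughly) a $\frac{1}{4}$-approximation. The missing idea is to stop treating the protected vertices individually: decompose $W$ into $W\setminus S^*$ and $S^*$ rather than into $W\setminus Q$ and $Q$. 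For $u\in W\setminus S^*\subseteq W\setminus Q$ the peeling bound applies as you argue, $t_W(u)\ge t_W(v)\ge t_{S^*}(v)\ge \tau^*$ (the exchange step is valid because the first deleted optimal vertex $v$ lies in $S^*\setminus Q$). For the block $S^*$, use monotonicity termwise and then double counting: $\sum_{u\in S^*}t_W(u)\ \ge\ \sum_{u\in S^*}t_{S^*}(u)\ =\ 3t(S^*)\ =\ 3|S^*|\,\tau^*$. Summing the two blocks gives $3t(W)\ \ge\ (|W|-|S^*|)\,\tau^* + 3|S^*|\,\tau^*\ \ge\ |W|\,\tau^*$, hence $\tau(W)\ge \tau^*/3$ (indeed $\tau(W)\ge \frac{1}{3}\bigl(1+2|S^*|/|W|\bigr)\tau^*$). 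The low-triangle vertices of $Q$ are thus paid for by the rest of $S^*$ through the block bound, and the clean constant $\frac{1}{3}$ survives; your degenerate case $S^*=Q$ is handled, as you note, because $Q$ is the final iterate.
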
 

%\begin{proof} 
%For each $v\in Q$ add an arc from $s$ to $v$ of capacity $n^{4}$.
%\end{proof}

In the following we evaluate the $\frac{1}{3}$-approximation 
algorithm on two datasets. The two experiments
indicate two different types of performances that should be expected 
in real-world applications. The first is a positive whereas the second 
is negative case. 
Both experiments here serve as sanity checks\footnote{ 
For instance,  by preprocessing the political vote data 
from a matrix form to a graph using a threshold for edge additions, 
results in  information loss. }

\spara{Political vote data.} 
We obtain Senate data  for the first session (2006) of the 109th congress which 
spanned the period from January 3, 2005 to January 3, 2007, 
during the fifth and sixth years of George W. Bush's presidency \cite{wikipedia}. 
In this Congress, there were 55, 45 and 1 Republican, Democratic and independent
senators respectively. The dataset can be downloaded from 
the US Senate web page \url{http://www.senate.gov}. 
We preprocess the dataset in the following way: we add an edge between two senators
if amonge the bills for which they both casted a vote, they voted at least 80\% 
of the times in the same way. The resulting graph has 100 vertices and 
2034 edges. We run the $\frac{1}{3}$-approximation algorithm 
on this graph using as our set $Q$ the first three republicans
according to lexicographic order: Alexander (R-TN), Allard (R-CO) and Allen (R-VA). 
We obtain at our output a subgraph consisting of 47 vertices. By inspecting
their party, we find that 100\% of them are Republicans. 
This shows that our algorithm in this case succeeds in finding the large majority 
of the cluster of republicans. It is interesting that the 8 remaining Republicans
do not enter the \textsf{\TDS}.  A careful inspection of the data, c.f. \cite{press},
indicates that 6 republicans agree with the party vote on at most 79\% of the bills, 
and 8 of them on at most 85\% of the bills.

\spara{DBLP graph.} We input as a query set $Q$ a set of scientists
who have established themselves in theory and algorithm design:
Richard Karp, Christos Papadimitriou, Mihalis Yannakakis and Santosh Vempala. 
The algorithm returns at its output the query set and a set $S$
of 44 vertices corresponding to a clique of (mostly) Italian computer 
scientists. We list a subset of the 44 vertices here: 
M. Bencivenni, M. Canaparo,  F. Capannini, L. Carota, M. Carpene,
R. Veraldi, P. Veronesi, M. Vistoli, R. Zappi. 
The output graph induced by $S \cup Q$ is disconnected. 
Therefore, this can be easily explained because of the following (folklore) inequality,
given that $|Q|<|S|$ in our example.

\begin{claim}
Let $a,b,c,d$ be non-negative. Then,
\begin{equation}
 \max{\big( \frac{a}{c}, \frac{b}{d} \big)} \geq \frac{a+b}{c+d} \geq \min{\big( \frac{a}{c}, \frac{b}{d} \big)}
\end{equation}
\end{claim}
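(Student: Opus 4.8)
The plan is to prove this mediant inequality by a symmetry reduction followed by cross-multiplication. First I would observe that the two ratios $\frac{a}{c}$ and $\frac{b}{d}$ enter the statement symmetrically, so without loss of generality I may assume $\frac{a}{c} \geq \frac{b}{d}$. Under this hypothesis the maximum on the left equals $\frac{a}{c}$ and the minimum on the right equals $\frac{b}{d}$, so it suffices to establish the single chain $\frac{a}{c} \geq \frac{a+b}{c+d} \geq \frac{b}{d}$.

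Each of the two inequalities in this chain I would verify by clearing denominators against positive quantities. For the left one, $\frac{a}{c} \geq \frac{a+b}{c+d}$ is equivalent (multiplying through by $c$ and $c+d$) to $a(c+d) \geq c(a+b)$, which after cancellation reduces to $ad \geq bc$, i.e. exactly the working hypothesis $\frac{a}{c} \geq \frac{b}{d}$. For the right one, $\frac{a+b}{c+d} \geq \frac{b}{d}$ is equivalent (multiplying by $d$ and $c+d$) to $d(a+b) \geq b(c+d)$, which again simplifies to $ad \geq bc$. Both therefore follow from the assumption, completing the proof.

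The only point requiring care is the role of the denominators: the expressions $\frac{a}{c}$, $\frac{b}{d}$, and $\frac{a+b}{c+d}$ divide by $c$, $d$, and $c+d$, so these must be positive for the statement to be meaningful. I would note that the intended reading is $c,d>0$, which is what makes all three fractions well defined and keeps the cross-multiplication steps direction-preserving; in the application the denominators are set sizes such as $|S|$ and $|Q|$, each at least one. Once positivity is granted and the case split is made, there is no genuine obstacle here, since the verification is a short algebraic cancellation; the entire content of the argument is the reduction to $ad \geq bc$.
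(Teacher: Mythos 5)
Your proof is correct, and in fact the paper offers no proof at all to compare against: the claim is stated there as a ``folklore'' inequality and used immediately with $a=t(S)$, $c=|S|$, $b=t(Q)$, $d=|Q|$. Your argument is the standard one for the mediant inequality --- symmetry reduction to $\frac{a}{c}\geq\frac{b}{d}$, then both halves of the chain collapse under cross-multiplication to the single condition $ad\geq bc$ --- and it is carried out cleanly. You are also right to flag the hypothesis: as literally stated, ``non-negative'' is too weak, since $c$, $d$, and $c+d$ must be positive for the three fractions to be defined and for the cross-multiplications to preserve direction; your observation that the paper's application has $c=|S|\geq 1$ and $d=|Q|\geq 1$ is exactly the repair needed, and is a point the paper itself glosses over.
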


\noindent In our example, we get $a=t(S),c=|S|,b=t(Q),d=|Q|$.
In such a scenario, where the output consists of the union of a 
dense subgraph and the query set $Q$,  
an algorithm which builds itself up from $Q$ -assuming $Q$ is not an independent set- 
to $V$ by adding vertices which create as many triangles as possible and returning 
the maximum density subgraph, 
rather than peeling vertices from $V$ downto $Q$ should be preferred in practice, 
see also \cite{tsourakakis2013denser}.

\hide{
This is the output of the algorithm 

M. Bencivenni, M. Canaparo,  F. Capannini, L. Carota, M. Carpene \\ 
A. Cavalli, A. Ceccanti, M. Cecchi, D. Cesini, A. Chierici \\ 
V. Ciaschini, A. Cristofori Luca dell'Agnello, D. De Girolamo, M. Donatelli \\ 
D. Dongiovanni, E. Fattibene, T. Ferrari, A. Ferraro, A. Ghiselli \\
D. Gregori, A. Italiano, L.  Magnoni, B. Martelli, \\
M. Mazzucato, M. Onofri,  A. Paolini, A. Prosperini \\
P. Ricci, E. Ronchieri, F. Rosso, D. Salomoni, V. Venturi \\ 
R. Veraldi, P. Veronesi, M Vistoli, D. Vitlacil, Riccardo Zappi \\
S. Dal Pra,  A. Forti, G. Guizzunti, G. Misurelli,Giuseppe Misurelli
V. Sapunenko, S. Zani   \\
{\it R. Karp, C. Papadimitriou, P. Raghavan, M. Yannakakis, M. Vardi, S. Vempala} 

}

\section{Conclusion}
\label{sec:conclusion}
In this work we introduce the average triangle density as a novel objective 
for attacking the important problem of finding near-cliques. We propose
exact and approximation algorithms and an efficient \mrc implementation.
Furthermore, we show how to generalize our results to maximizing the 
average $k$-clique density. Experimentally we verify
the value of the \TDSP as a novel addition to the graph mining toolbox. 
Also, we show how to solve a constrained version of the \TDSP which has 
various graph mining applications. 

Our work leaves numerous problems open, including the following:  
(a) Can we obtain a better exact solution? 
(b) How do approximate triangle counting methods affect the outcome of the 
$\frac{1}{3}$-approximation algorithm?
(c) Are there real-world networks where \TDSP fails to extract near-cliques?
In those networks, can the \KCDSP problem for $k$ constant render the situation
in an analogy of how the \TDSP succeeds in cases where the \DSP fails? 
We have implemented the exact algorithm in Section~\ref{subsec:kclique} and we have tested for $k=4$, 
namely maximizing the average $K4$ density.
Preliminary results on various graph datasets suggest that there can be gains when 
one uses 
higher $k$-values but the gain obtained from moving from the \DSP to the \TDSP 
is typically significantly larger from the gain obtained (if any) from the \TDSP to larger $k$ values. 
(d) It is clear that one can extract the top-$k$ non-overlapping triangle densest
subgraphs, using $k$ executions of one of our algorithms. 
For instance, extracting the top-7 \textsf{\TDS}s from the DBLP graph results 
in finding 7 cliques of size 44, 27, 25, 24, 20, 20 and 19\footnote{
The corresponding top-7 results for the \DSP reported as $(|S|,f_e(S))$ are 
(44,1), (27,1), (25,1), (25,1), (64, 0.31), (36,0.48), (89,0.19).}
Can we compute such dense subgraphs simultaneously? 

\hide{ top7 dblp
44	1	44	1
27	1	27	1
25	1	25	1
24	1	24	1
64	0.310515873	20	1
36	0.485714286	20	1
89	0.187946885	19	1
}

\hide{Concerning (a): An approach that does not work in terms 
of giving the proper guarantees, but is not ``far'' off is the following:
 Network $H$ is a weighted and directed version of $G$. 
Specifically, the vertex set of $H$ is $V(H) = \{s\} \cup V(G) \{t\}$. 
Furthermore for each edge $e=\{ u,v\} \in E(G)$ we add two directed 
arcs $(u,v), (v,u)$ in $E(H)$, where the weight of 
each arc $e \in E(G)$ is equal to the number of triangles 
edge $e$ participates in.} 

\section*{Acknowledgements}
I would like to thank Clifford Stein for pointing out 
that we may use \cite{ahuja1994improved} in the place
of other max flow algorithms to obtain a faster algorithm. 
Also, I would to thank Chen Avin, Kyle Fox, Ioannis Koutis, Danupon 
Nanongkai and Eli Upfal for their feedback.

\bibliographystyle{alpha}
\bibliography{ref}

\end{document}